\DeclareMathOperator*{\rank}{rank}
\def\vp{\varphi}
\DeclareMathSymbol{\mi}{\mathbin}{AMSa}{"39}
\def\pl{\text{+}}
\def\dd{\mathrm{d}}
\newcommand{\lr}[2]{\left\langle{#1},{#2}\right\rangle} 
\newcommand{\mh}[1]{\int_{t-T}^t{#1}\,\dd\tau} 
\def\inv{^{\mi1}}
\def\cL2{\mathop{\mathcal L}_{2}}
\def\cLi{{\mathcal L}_{\infty}}
\def\cLone{{\mathcal L}_{1}}
\def\cD{{\mathcal D}}
\def\cM{{\mathcal M}}
\def\cB{{\mathcal B}}
\def\cE{{\mathcal E}}
\def\cN{{\mathcal N}}
\def\cC{{\mathcal C}}
\def\cO{{\mathcal O}}
\def\cI{{\mathcal I}}
\def\cK{{\mathcal K}}
\newcommand{\ubrace}[2]{{\underbrace{#1}_{#2}}}
\newsavebox{\usqrtbox}
\newcommand{\usqrt}[1]{
  \sbox{\usqrtbox}{%
    \renewcommand{\ubrace}[2]{##1}
    $\displaystyle#1$%
  }%
  \sqrt{\phantom{\usebox{\usqrtbox}}}\hspace*{-\wd\usqrtbox}#1%
}
\newtheorem{remark}{Remark}
\newtheorem{ass}{Assumption}
\newtheorem{lemma}{Lemma}
\newtheorem{theorem}{Theorem}
\newtheorem{defi}{Definition}
    \DeclareRobustCommand{\qed}{%
      \ifmmode 
      \else \leavevmode\unskip\penalty9999 \hbox{}\nobreak\hfill
      \fi
      \quad\hbox{\qedsymbol}}
    \newcommand{\openbox}{\leavevmode
      \hbox to.77778em{%
      \hfil\vrule
      \vbox to.675em{\hrule width.6em\vfil\hrule}%
      \vrule\hfil}}
    \newcommand{\qedsymbol}{\openbox}
    \newenvironment{proof}[1][\proofname]{\par
      \normalfont
      \topsep6\p@\@plus6\p@ \trivlist
      \item[\hskip\labelsep\itshape
        #1.]\ignorespaces
    }{%
      \qed\endtrivlist
    }
    \newcommand{\proofname}{Proof}
\begin{document}
\begin{frontmatter}
\title{Modulating function based algebraic observer coupled with stable output predictor for LTV and sampled-data systems}
\author[TUI]{Matti Noack}\ead{matti.noack@tu-ilmenau.de},
\author[KAUST]{Ibrahima N'Doye}\ead{ibrahima.ndoye@kaust.edu.sa},
\author[TUI]{Johann Reger}\ead{johann.reger@tu-ilmenau.de},
\author[INRIA]{Taous-Meriem~Laleg-Kirati}\ead{taous-meriem.laleg@inria.fr}

\address[TUI]{Technische Universit\"at Ilmenau, Ehrenbergstra{\ss}e 29,
D-98693 Ilmenau}
\address[KAUST]{King Abdullah University~of~Science~and~Technology,~Thuwal 23955-6900,~Saudi~Arabia}
\address[INRIA]{National Institute for Research in Digital Science and Technology, INRIA, Paris-Saclay, 91120 Palaiseau,~France}

\begin{keyword}
	 Modulating function approach, Algebraic observer, Time-varying systems, Sampled-data systems, Output prediction, Trajectory-based approach
\end{keyword}

\begin{abstract}
	This paper proposes an algebraic observer-based modulating function approach for linear time-variant systems and a class of nonlinear systems with discrete measurements. The underlying idea lies in constructing an observability transformation that infers some properties of the modulating function approach for designing such algebraic observers. First, we investigate the algebraic observer design for linear time-variant systems under an observable canonical form for continuous-time measurements. Then, we  provide the convergence of the observation error in an $\cL2$-gain stability sense. Next, we develop an exponentially stable sampled-data observer which relies on the design of the algebraic observer and an output predictor to achieve state estimation from available measurements and under small inter-sampling periods. Using a trajectory-based approach, we prove the convergence of the observation error within a convergence rate that can be adjusted through the fixed time-horizon length of the modulating function and the upper bound of the sampling period. Furthermore, robustness of the sampled-data algebraic observer, which yields input-to-state stability, is inherited by the modulating kernel and the closed-loop output predictor design. Finally, we discuss the implementation procedure of the MF-based observer realization, demonstrate the applicability of the algebraic observer, and illustrate its performance through two examples given by linear time-invariant and linear time-variant systems with nonlinear input-output injection terms.
\end{abstract}

\end{frontmatter}



\section{Introduction}
The subject of designing non-asymptotic state estimators remains an active area of research, especially considering the extension of suitable system classes while taking different sources of uncertainty into account. This group of fixed-time convergent observers encompasses techniques like sliding-mode and super twisting observers \cite{SEFL:14}, time-dependent scaling and delay-based approaches \cite{MazencMN20} as well as the modulating function method \cite{RegerJ15}, the latter of which is emphasized in this work. Major advantages incorporate arbitrarily tunable convergence time when compared to only exponentially converging algorithms, the related robustness and disturbance characterizations.
One of the major challenges in the given context is dealing with discrete and unevenly sampled measurements since the mentioned approaches focus on a continuous-time perspective. Furthermore, the considered family of systems is of linear time-varying (LTV) nature with the addition of measurement-based nonlinearities allowing for a large set of relevant technical problems.\\
LTV systems are important models for several engineering applications \cite{MMN:14} and arise mainly with approximations of nonlinear systems
and nonlinear tracking control problems \cite{MaP:00}. Many efforts have been devoted to finding less conservative stability conditions of LTV systems through Lyapunov function and perturbation analysis \cite{Zho:16}. However, the stability analysis of LTV systems, which is one of the first open problems in control theory \cite{AeP:99}, remains challenging. 
Especially for the design of non-asymptotic observers in the LTV case, particular structural properties are often required \cite{DTF:22}.\\
Sampled-data setups, also arising in the context of continuous-discrete systems and hybrid systems, in which the output measurements are available at discrete-time instants, appear frequently in networked control systems \cite{HNX:07}, cyber-physical systems \cite{San:16}, and manufacturing \cite{Chr:13}. For such a class of problems that comprises continuous-time dynamics with discrete output measurements, the corresponding exact discrete-time system is challenging to compute for nonlinear systems \cite{NABLH:14}. In such context, continuous-discrete observer designs have been proposed to guarantee the observation error convergence for a sufficiently small sampling period in \cite{NABLH:14} and for any constant sampling period in \cite{SuP:97}. These observers have been extended to the case of sporadic measurements (see, for instance, \cite{RaA:07,RKA:08,DANS:15,EHEP:17,STZ:19}). Another perspective on the sampled-data case can be taken via the switched systems approach. For such particular class of systems, several works have been devoted to designing observers capable of estimating the switching signal (see, for instance, \cite{Bat:13,BBBS:13,ZVPR:19}).
Further investigations focus on constructing conditions regarding the sampling periods for the sampled-data case to ensure asymptotic or exponential stability guarantees of the observation error as in \cite{BeS:22}, where the state exhibits instantaneous jumps.
Compared to such continuous-discrete observers including switched observers, the idea of compensating the sampling effect with inter-sample output predictors that reinitialize the observer at the sampling times provides essential features. This includes exponential convergence, robustness features, and ease of implementation~\cite{KAG:20}. Sampled-data observer designs based on the so-called dynamic extension via output prediction have been initially developed for detectable linear and triangular nonlinear systems \cite{KaK:09}, and have been the subject of several extensions~\cite{FAGBL:16,AGAE:21,KAG:20,MazencMN20,MazencMN22}.
In light of the above review, the problem of designing robust state estimation algorithms for continuous-time dynamic systems in which the output measurements are sampled at specific time instances poses fundamental challenges due to the sampling nature of the output and discrete noisy measurements. Therefore, sufficient conditions on the sampling time are needed to guarantee convergence of the observation error.\\
In line with this, as a burst of research activity in estimation problems, the modulating function method (MFM) offers several advantages, including a systematic way for state reconstruction within finite time, not depending on the initial conditions. Additionally, modulating function (MF) based methods provide robustness that is inherited from the integration process of the modulating kernel, which can be interpreted as a finite impulse response (FIR)-filter \cite{LiuGP11}. Initially, MF methods have been developed for parameter identification of continuous-time linear ordinary differential equations (ODEs) \cite{Shin:54,PrR:93}. After the introduction of the algebraic formulation of exact state observers \cite{Byrski03} for general linear time-invariant (LTI) problems, MFs have further been extended to joint state and parameter estimation for linear systems \cite{LiL:15} and systems in observability normal form (ONF) with an additional polynomial nonlinearity structure \cite{RegerJ15} or a general nonlinear input-output injection in \cite{LiuLBWW22}.
Recently, MF techniques have been extended towards different classes of nonlinear systems \cite{NoackKR22,MNL:23,ANL:23}. These works leverage the non-asymptotic state estimation concepts and establish a significant step towards accommodating for nonlinearities in ODEs by using MFs.\\
One of the key aspects of this work in terms of dealing with the time-variant setup is the transformation into observability companion form (OCF) to take advantage of the differential structure of the related input-dynamics to which the MFM can directly be applied. This includes the classical concept of strong observability and the existence of a Lyapunov transformation (matrix function) preserving the stability properties of the original system.
Subsequently, the input-output nonlinearity is incorporated by treating it as a multi-dimensional known input signal.
The last step of dealing with corrupted measurements at unknown discrete time instances and verifying stability for the sampled-data case is performed by following the strategy established in \cite{MazencMN22} consisting of the dynamic prediction of the unknown output as well as of applying the trajectory-based approach \cite{MazencM15} which turns out to be especially suitable for system formulations with delays or finite history integrals.
In comparison to the aforementioned work, the main distinction here lies within the non-asymptotic observer approach, which in \cite{MazencMN22} is based on temporal shifts with respect to a selected time step and results in a specific delay-based observability characterization. The corresponding transformation matrix is obtained by computing the transition matrix, whereas here, differentiation of the matrix functions has to be performed to obtain the observer-related coordinates. This parallel consideration of the two particular perspectives is further discussed throughout this paper.
The main advantages from deploying a MFM observer structure are the tuning opportunities that the kernel-based approach offers. First, the moving horizon length can again be seen analogously to the temporal data shift approach and directly determines the convergence time. Secondly, the kernel shape itself can be chosen according to the weighted disturbance gains and also with regards to the stability margin originating from the stability analysis.\\
This paper's result encompasses the following novel developments:
\vspace{-1ex}
\begin{enumerate}
    \item We propose a MF-based observer
    for multi-input LTV systems exploiting the properties of the output dynamics related to the observability companion form (see Section \ref{sec:algobs}).
    \item We provide a compact $\cL2$ error characterization for the exact observer with respect to measurement and process disturbances (see Theorem \ref{thm:mf_observer}).
    \item We extend the MF-based state reconstruction to sampled-data systems with input-output nonlinear terms by dynamic extension using an output predictor (see Section \ref{sec:sd_estimation})
    while proving a stability criterion for exponential convergence of the state estimation error within its ISS bounds using the trajectory-based approach from \cite{MazencM15} (see Theorem \ref{thm:sd_observer}).
\end{enumerate}
\vspace{-1ex}
Finally, we discuss the implementation procedure of the MF-based observer realization, demonstrate the applicability of the algebraic observer and illustrate its performance through two examples given by LTI and LTV systems with nonlinear input-output injection terms. Thereby, the conservativeness of the error bounds is assessed during a simulative comparison study.\\
The paper is organized as follows: In Section \ref{sec:algobs}, we study the algebraic observer algorithm for linear time-varying systems resorting to an observable canonical form for continuous-time systems. We then show the convergence of the observation error in $\cL2$-gain sense through direct implicit inequalities. In Section \ref{sec:sd_estimation}, we extend the methodology to the sampled-data case and provide the stability result of the sampled-data observation error and its proof, guaranteeing exponential input-to-state stability of the estimation error in the presence of process disturbances and measurement noise. In Section \ref{sec:implementation}, we proceed with guidelines on the numerical implementation of the estimation scheme which is demonstrated using two examples. These examples illustrate the observer performance and let compare it to a similar approach. Finally, conclusions and perspectives are stated in Section~\ref{concl}.\\[1ex]
{\bf Notation:}
Matrix $A^{\top}$ is the transposed matrix of $A$ and the identity matrix is $I$. The $j$-th unit vector is written as $e_j$, the exchange matrix $J_j$ of dimension $j\times j$ only contains ones on the anti-diagonal and zeros elsewhere.
The short form $\|\cdot\|=\|\cdot\|_2$ denotes the Euclidean vector and induced matrix norm.
For a compact interval $\mathcal{I}$ and a measurable function $\xi$, the $\mathcal{L}_p$ norm is written as $\|\xi\|_{\mathcal{L}_p(\mathcal{I})}=\left(\int_\mathcal{I}\|\xi(\tau)\|^p\dd\tau\right)^{\frac{1}{p}}$.
$\cL2(\mathcal{I})$ represents a Hilbert space with inner product $\lr{\xi}{\eta}_\mathcal{I}=\int_\cI\xi(\tau)^\top\!\eta(\tau)\dd\tau$ for $\xi,\eta\in\cL2(\mathcal{I})$.
The space of signals with finite $\cL2$ norm on every compact interval is called $\cL2^\text{loc}$.
With $\dd_t=\frac{\dd}{\dd t}$, the operator $\cD_k=[\dd_t^{k\mi1},\ldots,\dd_t,\text{id}]$ denotes the vector differentiation (as row vector) up to order $k-1\in\mathbb{N}$.
Its adjoint operator is defined as $\cD_k^\ast=[(\mi1)^{k\mi1}\dd_t^{k\mi1},\ldots,\mi\dd_t,\text{id}]^\top$ (column vector).



\section{Algebraic Observer Design}
\label{sec:algobs}
Before extending the observer approach to sampled-data systems with additive output nonlinearity, a linear time-variant system is studied to establish the basic algorithm and to characterize the impact of different disturbances.

\subsection{Linear Time-variant Problem Setup}
Consider the perturbed multi-input linear time-variant (LTV) system
\begin{equation}\label{eq:sys}
 \left\{\begin{array}{l}
        \dot{x}=A(t)x+B(t)u+E(t)d \\
        \tilde{y}=\textcolor{black}{C(t)}x+\nu=y+\nu
    \end{array}\right.
\end{equation}
with bounded and sufficiently smooth matrix-valued functions $A\in\cC^n(\mathbb{R}^+_0,\mathbb{R}^{n\times n}), B\in\cC^n(\mathbb{R}^+_0,\mathbb{R}^{n\times m}),C\in\cC^n(\mathbb{R}^+_0,\mathbb{R}^{1\times n})$ and $E\in\cC^n(\mathbb{R}^+_0,\mathbb{R}^{n\times p})$
where $x(t)\in\mathbb{R}^n, y(t), \tilde{y}(t)\in\mathbb{R}, u(t)\in\mathbb{R}^m,$ $d(t)\in\mathbb{R}^p,\ \nu(t)\in\mathbb{R}$ are the unknown state, associated output, measurement output, control input, process disturbance and sensor noise, respectively.
We assume disturbances and noise as $d,\nu\in\cL2^{\text{loc}}$, i.e. with finite energy on every compact time interval. Additionally, the following structural assumption is central for the following argumentation.
\begin{ass}\label{as:obsvb}
Let system representation \eqref{eq:sys} be uniformly observable with respect to time \cite{SilvermanMeadows67} meaning $\rank(\cO(t))=n\,\forall t\geqslant0$ with observability matrix defined as
\vspace{-2ex}
\begin{equation}\label{eq:obsmat}
    \cO(t)\!=\!\begin{pmatrix}
        \cN^0C \\ \cN^1C \\ \vdots \\ \cN^{n\mi1}C
    \end{pmatrix}
    \text{ where }
    \begin{cases}
        \cN^0C=C(t) \\
        \cN^kC=\frac{\dd}{\dd t}\cN^{k\mi1}C\\
        \hspace{5ex}+(\cN^{k\mi1}C)A(t)
    \end{cases}\!\!\!\!.
\end{equation}
Furthermore, it is assumed to be strongly observable \cite[Lem. 2.32]{TsakalisIoannou93}, namely $\exists\,\delta>0:|\det(\cO(t))|>\delta\,\forall t\geqslant0$.
\end{ass}

Assumption \ref{as:obsvb} guarantees the existence of a Lyapunov transformation $P\in\cC^1(\mathbb{R}_0^+,\mathbb{R}^{n\times n})$, being uniformly regular and bounded with bounded derivative $\dot{P}$, which is constructed by using \eqref{eq:obsmat} with $q(t)=\cO(t)\inv e_n$:
\begin{equation}\label{eq:trafo}
    P(t)\!=\!\left[q|\cK q|\cdots|\cK^{n\mi1} q\right]\!J_n
    \text{ where }
    \begin{cases}
        \cK^0q\!=\!q(t) \\
        \cK^kq\!=\!\mi\frac{\dd}{\dd t}\cK^{k\mi1}q\\
        \hspace{2ex}+A(t)(\cK^{k\mi1}q)
    \end{cases}
\end{equation}
transferring system \eqref{eq:sys} into so-called canonical (phase variable) observability form via the coordinate change $z=P(t)^{\mi1}x$. Note that the anti-diagonal exchange matrix $J_n$ leads to a permutation of the transformed state such that $y=z_1$. It also allows for a shift from the state space perspective to the input-output behavior being obtained from this observability companion form (OCF) as discussed in \cite{krener_linearization_1983} for nonlinear systems.
Consequently, the following structure used throughout this section turns out to be reminiscent of the linear time-invariant (LTI) form used in \cite[Th.1]{RegerJ15} for MF-based observer design:
\begin{equation}\label{eq:OCF}
    \begingroup
     \renewcommand*{\arraystretch}{1.1} 
     \arraycolsep=2.5pt 
     \left\{
     \begin{array}{l}
        \!\!\dot{z}\!=\!\!\underbrace{\begin{pmatrix}
            \mi a_{n\mi1}(t)&1&&\\\mi a_{n\mi2}(t)&0&\ddots&\\\vdots&\vdots&\ddots&1\\\mi a_0(t)&0&\cdots&0
        \end{pmatrix}}_{=A_o(t)}\!\!z\!+\!\!\underbrace{\begin{pmatrix}
            B_{n\mi1}^\top(t)\\ B_{n\mi2}^\top(t)\\\vdots\\ B_0^\top\!(t) \end{pmatrix}}_{=B_o(t)}\!\!u\!+\!\!\underbrace{\begin{pmatrix}
            E_{n\mi1}^\top(t)\\ E_{n\mi2}^\top(t)\\\vdots\\ E_0^\top\!(t)
        \end{pmatrix}}_{=E_o(t)}\!\!d \\
        \!\tilde{y}\!=\!\textcolor{black}{C_o}z+\nu=z_1+\nu
    \end{array}
    \right.
    \endgroup
\end{equation}
with transformed matrices $A_o(t)=P^{\mi1}(AP-\dot P)\in\mathbb{R}^{n\times n}, B_o(t)=P^{\mi1}B\in\mathbb{R}^{n\times m},\textcolor{black}{C_o=CP=e_1^\top}\in\mathbb{R}^{1\times n}$ and $E_o(t)=P^{\mi1}E\in\mathbb{R}^{n\times p}$ by using \eqref{eq:trafo}.

The goal is to algebraically reconstruct the transformed state $z$ in fixed time, based on $u$ and measured $y$ while being able to characterize the impact of unknowns $d$ and~$\nu$.

\begin{remark}[Observability canonical form]\label{rem:obs_canon}
An alternative system representation is the observability canonical form for time-variant systems as discussed in \cite[Sec.3]{zeitz_observability_1984}. For construction, the observability matrix $\cO$ is directly used as the transformation matrix. In the present article, this representation is not used as it leads to a different input-output representation that is less suitable for the MF application.
This comes down to the distinction between left and right Polynomial Differential Operators (PDO) \cite[Def.2.1\&2.2]{TsakalisIoannou93}.
In \cite{NoackKR22}, such a form is utilized for designing an MFM-based observer for time-variant constellations resulting from linearization. 
\end{remark}

\subsection{Modulating Function Method}
In the context of MF based algebraic estimation methods, often a moving horizon perspective is emphasized focusing on a finite set of past measurement data.
This generally facilitates the analysis of stability and the effect of disturbance signals on the time window. For the construction of an appropriate modulation kernel with regards to a particular estimation problem, the boundary conditions play a vital role in extracting desired information.
The following definition combines the established structures from \cite[Def.2]{RegerJ15} and \cite[Def.3]{NoackKR22}, and extends them to a vector kernel case.
\vspace{0.5ex}
\begin{defi}[Left \& unitary MFs]\label{def:mf}
    For a fixed time horizon length $T>0$, the vector-valued function $\varphi\in\mathcal{C}^k([0,T],\mathbb{R}^k),$ is called left modulating function (LMF) of order $k\in\mathbb{N}$ if
    \[
        \varphi^{(i)}(0)=0,\,\varphi^{(i)}(T)=c_i\in\mathbb{R}^k
    \]
    $\forall i\in\{0,1,\ldots,k-1\}$.
    Furthermore, it is called unitary modulating function (UMF) if
    \begin{equation}\label{eq:umfcond}
        \varphi^{(i)}(T)=c_i=(-1)^i\,e_{k-i}
    \end{equation}
    with $e_j\in\mathbb{R}^k$ the $j$-th unit vector.
    The corresponding modulation operator $\mathcal{M}:\cL2^\text{loc}\rightarrow\mathbb{R}^k$ forms an inner product applied to the signal $\xi\in\cL2^\text{loc}([T,\infty),\mathbb{R})$ as per
    \begin{equation}\label{eq:modop}
        \mathcal{M}[\xi]=\mh{\varphi(\tau-t+T)\,\xi(\tau)}\,
        =\lr{\varphi^\top\!\!}{\xi}
    \end{equation}
   where the modulation takes place vector component-wise over a moving horizon.\\
\end{defi}


Now, consider the representation of the input-output dynamics of system \eqref{eq:sys}, corresponding to the time-variant OCF \eqref{eq:OCF}, that is
\begin{equation}\label{eq:inout}
     y^{(n)}+\sum_{i=0}^{n-1} \big(a_i(t)y\big)^{(i)}=
     \sum_{j=0}^{n-1} \big(B_j^\top\!(t)u\big)^{(j)}+\sum_{l=0}^{n-1}\big(E_l^\top\!(t)d\big)^{(l)}
\end{equation}
where $a(t)=-A_oe_1=[a_{n-1},\ldots,a_0]^\top\in\mathbb{R}^n$ holds the characteristic coefficients related to the right PDO for $A$ and $A_o$ with corresponding input vectors $B_j(t)\in\mathbb{R}^m,E_l(t)\in\mathbb{R}^p$.
Note that in case of the companion form as a particular normal form, the differential operators are outside of the product of signals and coefficient functions resulting in a right PDO formulation suitable for estimation.\\
For applying the MFM, input-output equation \eqref{eq:inout} is reformulated into a linear operator representation analog to the framework established in \cite{NoackKR22}:
\begin{equation}\label{eq:inoutop}
    \underbrace{y^{(n)}+\cD\big[a(t)\,y\big]}_{L y}=\underbrace{\cD\big[B_o(t)\,u\big]}_{\cB u}+\underbrace
    {\cD\big[E_o(t)\,d\big]}_{\cE d}
\end{equation}
with linear time-variant differential operator $L(t)[\,\cdot\,]=\big(\frac{\dd^{n}}{\dd t^{n}}[\,\cdot\,]+\cD (a(t)[\,\cdot\,])\big)$ and input operators $\cB(t)[\,\cdot\,]=\cD (B_o(t)[\,\cdot\,]), \cE(t)[\,\cdot\,]=\cD (E_o(t)[\,\cdot\,])$ constructed via the short form differentiation operator $\cD=\cD_n$. The time argument is occasionally omitted for simplicity.
The application of the MFM to the linear structure from \cite[Th.1]{NoackKR22}, is adapted similarly in the following lemma. Specifically, the case with time-variant coefficients inside the differentiation due to the OCF is considered analogously while generalizing the result to systems with multiple inputs.
\begin{lemma}[Adjoint equation \& boundary values]\label{lem:shift}
    For a given LMF $\varphi:[0,T]\rightarrow\mathbb{R}^n$ of order $n$, applying the modulation operator $\cM$ from \eqref{eq:modop} to \eqref{eq:inoutop} leads to
    \begin{equation}\label{eq:shift_bc}
        \begin{aligned}
            \lr{L^\ast\varphi}{y}+\Delta\,Y(t) 
            =\lr{\cB^\ast\!\vp}{u}+\sum_{j=1}^{m}\,\Gamma_\cB^j\,U_j(t)\hspace{6ex} \\[-1.5ex]
            +\lr{\cE^\ast\!\vp}{d\,}+\sum_{l=1}^{p}\,\Gamma_\cE^l\,D_l(t)
        \end{aligned}
    \end{equation}
    with adjoint operators
    \begin{equation*}
        \begin{aligned}
            L^\ast(t)[\,\cdot\,]=(-1)^n\frac{\dd^{n}}{\dd t^{n}}[\,\cdot\,]^\top\!+a(t)^\top\cD^\ast[\,\cdot\,]^\top\,,\\
            \cB^\ast(t)[\,\cdot\,]=B_o^\top\!(t)\cD^\ast[\,\cdot\,]^\top\,,\,\,
            \cE^\ast(t)[\,\cdot\,]=E_o^\top\!(t)\cD^\ast[\,\cdot\,]^\top\,.
        \end{aligned}
    \end{equation*}
    Therein, the collection of differentiation signal vectors $Y=[y,\dot{y},\ldots,y^{(n-1)}]^\top,$ $U_j=[u_j,\dot{u}_j,\ldots,u_j^{(n-1)}]^\top$ and $D_l=[d_l,\dot{d}_l,\ldots,d_l^{(n-1)}]^\top$ are separated via the boundary condition coefficient matrices
    \[
        \begin{aligned}
            \Delta =&\, \big[L_1^\ast\varphi(T)|\cdots|L_n^\ast\varphi(T)\big] \,, \\
            \Gamma_\cB^j =&\, \big[\cB_{j,1}^\ast\varphi(T)|\cdots|\cB_{j,n}^\ast\varphi(T)\big] \,, \\
            \Gamma_\cE^l =&\, \big[\cE_{l,1}^\ast\varphi(T)|\cdots|\cE_{l,n}^\ast\varphi(T)\big] \,.
        \end{aligned}
    \]
    The minor adjoint operators $L_k^\ast,\cB_{j,k}^\ast$ and $\cE_{l,k}^\ast$ are analogously associated with
    \begin{equation*}
        \begin{aligned}
            &L_k\!\!=\!\!\!\sum_{i=0}^{n\mi k}\mathrm{d}^{n\mi k\mi i}_t\big(a_{n\mi i}(t)[\,\cdot\,]\big),
            \cB_{j,k}\!\!=\!\!\!\sum_{i=0}^{n\mi k}\mathrm{d}^{n\mi k\mi i}_t\big(B_{j,n\mi i}(t)[\,\cdot\,]\big)\,, \\[-0.5ex]
            &\hspace{2cm}\cE_{l,k}\!=\!\sum_{i=0}^{n-k}\mathrm{d}^{n\mi k\mi i}_t\big(E_{l,n\mi i}(t)[\,\cdot\,]\big)
        \end{aligned} 
    \end{equation*}
    where $L_0=L,L_n=1,\cB_{j,0}=\cB_j,\cE_{l,0}=\cE_l$ and $\cB_{j,n}=\cE_{l,n}=0$ particularly hold.
\end{lemma}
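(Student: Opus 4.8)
The plan is to apply the modulation operator $\cM$ from \eqref{eq:modop} term by term to the operator identity \eqref{eq:inoutop}, transferring every derivative off the signals onto the kernel $\varphi(\tau-t+T)$ by repeated integration by parts over the moving window $[t-T,t]$. Since $\cM$ acts componentwise on the vector $\varphi=(\varphi_1,\dots,\varphi_n)^\top$, it suffices to argue for a scalar kernel. Each integration by parts creates boundary evaluations at the two endpoints: the contribution at $\tau=t-T$ involves $\varphi^{(i)}(0)$ with $i\le n-1$ and vanishes identically by the LMF property of Definition~\ref{def:mf}; the contribution at $\tau=t$ involves $\varphi^{(i)}(T)$ and is retained.

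First I would rewrite the left operator as $Ly=\sum_{m=0}^{n}\bigl(a_m(t)\,y\bigr)^{(m)}$ with the convention $a_n\equiv 1$, so that the pure $y^{(n)}$ term is absorbed into the sum. Integrating the $m$-th summand by parts exactly $m$ times and discarding the vanishing left-endpoint terms leaves the transferred integral $(-1)^m\,\lr{\varphi^{(m)}}{a_m y}$ together with the right-endpoint residue $\sum_{\ell=0}^{m-1}(-1)^\ell\varphi^{(\ell)}(T)\,(a_m y)^{(m-1-\ell)}(t)$. Summing over $m$, the transferred parts collapse to $\lr{L^\ast\varphi}{y}$ with $L^\ast\varphi=(-1)^n\varphi^{(n)}+\sum_{m=0}^{n-1}(-1)^m a_m\varphi^{(m)}$, i.e.\ the stated adjoint $L^\ast(t)[\cdot]=(-1)^n\frac{\dd^{n}}{\dd t^{n}}[\cdot]^\top+a(t)^\top\cD^\ast[\cdot]^\top$. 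The input and disturbance terms $\cB u=\cD[B_o(t)u]=\sum_{j=0}^{n-1}\bigl(B_j^\top(t)u\bigr)^{(j)}$ and $\cE d=\cD[E_o(t)d]$ are processed identically, with $B_j^\top u$ and $E_l^\top d$ in the role of $a_m y$; here the sum runs only to order $n-1$, so $\cB^\ast$ and $\cE^\ast$ carry no $n$-th order derivative and $\cB_{j,n}=\cE_{l,n}=0$.

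It then remains to repackage the accumulated right-endpoint residues into the compact boundary forms $\Delta\,Y(t)$, $\sum_{j}\Gamma_\cB^j U_j(t)$ and $\sum_{l}\Gamma_\cE^l D_l(t)$. Rather than unfolding the double sum over $(m,\ell)$ directly, I would use the Horner-type factorization $L_k=\dd_t\,L_{k+1}+a_k(t)$, with $L_0=L$ and $L_n=1$, which is precisely what the recursive definitions of the minor operators $L_k,\cB_{j,k},\cE_{l,k}$ encode. Carrying the integration by parts out along this factorization, an induction on the number of peeled derivatives identifies the boundary residue of $\cM[Ly]$ with $\sum_{k=1}^{n}(-1)^{k-1}\varphi^{(k-1)}(T)\,(L_k y)(t)$; expanding each $(L_k y)(t)$ by the Leibniz rule and collecting the coefficient of $y^{(k-1)}(t)$ then identifies the $k$-th column of $\Delta$ with the boundary value $L_k^\ast\varphi(T)$ of the minor adjoint operator, and the same computation with $B_{j,\bullet}$, $E_{l,\bullet}$ in place of $a_\bullet$ yields the columns of $\Gamma_\cB^j$, $\Gamma_\cE^l$ and the vectors $U_j(t)$, $D_l(t)$. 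Reassembling the pieces then gives \eqref{eq:shift_bc}.

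The step I expect to be the real obstacle is exactly this last reorganization: one must keep the evaluation point of the time-varying coefficients (frozen at $\tau=t$) rigorously distinct from that of the kernel derivatives (taken at the horizon end $T$), and verify that the cross terms produced by the Leibniz rule --- those carrying derivatives of $a_m$, $B_{j,m}$, $E_{l,m}$ --- assemble into precisely the minor adjoint operators $L_k^\ast,\cB_{j,k}^\ast,\cE_{l,k}^\ast$ and nothing more. The Horner recursion is the device that makes this tractable, collapsing the bookkeeping of a nested $(m,\ell)$-sum into a single induction over the derivative order $k$.
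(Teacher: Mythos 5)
Your proposal is correct and follows essentially the same route as the paper's proof: repeated integration by parts over the moving window, with the left-endpoint terms annihilated by the LMF conditions $\varphi^{(i)}(0)=0$ and the right-endpoint residues reorganized via the bilinear concomitant (Green's formula) into the matrices $\Delta,\Gamma_\cB^j,\Gamma_\cE^l$. The only difference is one of exposition --- the paper delegates the computation to the cited SISO LTI derivation and merely notes the multi-input and time-variant adaptations, whereas you make the bookkeeping explicit through the Horner-type recursion $L_k=\dd_t L_{k+1}+a_k$, which is exactly the structure encoded in the minor operators $L_k$ and is consistent with their stated definitions.
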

\begin{proof}
    In \cite[proof of Th.1]{NoackKR22}, this argumentation is realized for the SISO LTI case.
    Analog to that derivation, partial integration of the modulated operator equation \eqref{eq:inoutop} representing the output dynamics results in the shift of the differential operations in adjoint form to the kernel.
    Accordingly, the corresponding boundary values from LMF condition $\vp^{(i)}(T)=c_i$ lead to the isolation of the signal differentiation vectors $Y,U_j,D_l$ as a rearranged form of Green's formula.
    Here, the derivative shift related to the vector-valued input signals $u$ and $d$, is treated similar to the scalar operator case.
    As the time-variant coefficient functions are situated inside the the differentiation, the adjoint shift directly effects the modulation kernel with coefficient derivatives only appearing in the boundary isolation step.
\end{proof}

The direct modulation result \eqref{eq:shift_bc} illustrates the derivative shift property of the MF approach, moving the differentiation operator from the system variables to the modulation kernel. A challenge for the state estimation task is the appearance of the input signal boundary conditions including their generally unknown derivative values.

\subsection{MF-based Observer Algorithm}
In order to have the input boundary values not interfere with the estimation process, the properties of the OCF are exploited. The algebraic observer for reconstructing the OCF state of system \eqref{eq:OCF}, while quantifying the disturbance impact, is based on the following assumption.
\vspace{0.5ex}
\begin{ass}\label{as:disturb}
    The process disturbance $d$ and the measurement perturbation $\nu$ possess a global bound on their local $\cL2$ norm regarding the moving horizon $[t-T,t]$:
    \begin{enumerate}
        \item $\|\nu\|_{\cL2[t-T,t]}\leqslant\bar{\nu}=\underset{t\geqslant0}{\sup}\,\|\nu\|_{\cL2[t-T,t]}$\,,
        \item $\|d\|_{\cL2[t-T,t]}\leqslant\bar{d}=\underset{t\geqslant0}{\sup}\,\|d\|_{\cL2[t-T,t]}$\,.
    \end{enumerate}
\end{ass}
\vspace{0.5ex}
Using the operator form from Lemma \ref{lem:shift} and rearranging the boundary conditions in the context of the ONF enables the development of an MF-based state estimator with error characterization as stated in following fundamental theorem.
\begin{theorem}[MF-based observer]\label{thm:mf_observer}
    For a given UMF $\varphi:[0,T]\rightarrow\mathbb{R}^n$ of order $n$ and disturbances $d,\nu$ fulfilling Assumption \ref{as:disturb}, the state of system \eqref{eq:OCF} is reconstructed by
    \begin{equation}\label{eq:mfobs}
        \hat{z}=-\lr{L^\ast\varphi}{\tilde{y}}+\lr{\cB^\ast\vp}{u}
    \end{equation}
    with bound on the error $e_z(t)=z-\hat{z}$ such that $\forall t\geqslant t_0+T$
    \begin{equation*}
        \|e_z(t)\|_2
        \leqslant\|\cE^\ast\varphi\|_{\cL2[0,T]}\,\bar{d}
        +\|L^\ast\varphi\|_{\cL2[0,T]}\,\bar{\nu}\,.
    \end{equation*}
\end{theorem}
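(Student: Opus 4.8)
The plan is to start from the exact modulation identity established in Lemma~\ref{lem:shift}, specialize it to a UMF, and read off the estimator together with its error. First I would apply Lemma~\ref{lem:shift} to the true output $y$ (not the measured $\tilde y$): equation~\eqref{eq:shift_bc} gives
\[
  \lr{L^\ast\varphi}{y}+\Delta\,Y(t)=\lr{\cB^\ast\vp}{u}+\sum_{j=1}^m\Gamma_\cB^j U_j(t)+\lr{\cE^\ast\vp}{d}+\sum_{l=1}^p\Gamma_\cE^l D_l(t).
\]
The key structural point, which I would verify next, is that for a \emph{unitary} MF the boundary-coefficient matrices collapse in a way dictated by the OCF. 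Because $\vp^{(i)}(T)=(-1)^i e_{n-i}$ and the minor operators $L_k,\cB_{j,k},\cE_{l,k}$ are built from the coefficient functions $a_{n-i},B_{j,n-i},E_{l,n-i}$ with $L_n=1$, $\cB_{j,n}=\cE_{l,n}=0$, I expect $\Delta$ to reduce to (a signed permutation of) the identity acting on $Y$, reconstructing exactly the OCF state vector $z$ up to the $J_n$ permutation already built into~\eqref{eq:OCF}, while $\Gamma_\cB^j$ and $\Gamma_\cE^l$ simplify so that the only surviving input/disturbance contributions are the inner-product terms $\lr{\cB^\ast\vp}{u}$ and $\lr{\cE^\ast\vp}{d}$. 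Concretely I would show $\Delta Y(t)=z(t)$ (this is the ``phase-variable'' miracle of the companion form: the boundary isolation step returns $[y,\dot y,\dots,y^{(n-1)}]^\top$, which is $z$ in OCF coordinates) and that the telescoping of the $\Gamma$-matrices against $U_j,D_l$ vanishes or is absorbed — this is the step where the special choice of $c_i$ in \eqref{eq:umfcond} is essential, and it is the main obstacle: one must carefully track the derivative-shift bookkeeping from Lemma~\ref{lem:shift} and confirm that \emph{no} unknown input-derivative boundary data $U_j(T)$ survives.

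Granting that, the true-output identity becomes $z(t)=-\lr{L^\ast\varphi}{y}+\lr{\cB^\ast\vp}{u}+\lr{\cE^\ast\vp}{d}$. Substituting the measured output $\tilde y=y+\nu$ into the estimator definition~\eqref{eq:mfobs} gives
\[
  \hat z=-\lr{L^\ast\varphi}{y}-\lr{L^\ast\varphi}{\nu}+\lr{\cB^\ast\vp}{u},
\]
so the error is simply $e_z=z-\hat z=\lr{\cE^\ast\varphi}{d}+\lr{L^\ast\varphi}{\nu}$, with both inner products taken over the moving horizon $[t-T,t]$ (equivalently over $[0,T]$ after the shift in~\eqref{eq:modop}). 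This requires $t\geqslant t_0+T$ so that the full modulation window lies inside the data record, which matches the theorem's hypothesis.

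The final step is a routine Cauchy--Schwarz estimate on each inner product. After the change of variable $\tau\mapsto\tau-t+T$ mapping $[t-T,t]$ to $[0,T]$, $\|\lr{\cE^\ast\varphi}{d}\|\leqslant\|\cE^\ast\varphi\|_{\cL2[0,T]}\|d\|_{\cL2[t-T,t]}\leqslant\|\cE^\ast\varphi\|_{\cL2[0,T]}\,\bar d$ by Assumption~\ref{as:disturb}\textit{(ii)}, and likewise $\|\lr{L^\ast\varphi}{\nu}\|\leqslant\|L^\ast\varphi\|_{\cL2[0,T]}\,\bar\nu$ by Assumption~\ref{as:disturb}\textit{(i)}. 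The triangle inequality on $e_z=\lr{\cE^\ast\varphi}{d}+\lr{L^\ast\varphi}{\nu}$ then yields
\[
  \|e_z(t)\|_2\leqslant\|\cE^\ast\varphi\|_{\cL2[0,T]}\,\bar d+\|L^\ast\varphi\|_{\cL2[0,T]}\,\bar\nu,
\]
which is the claimed bound. I expect the only genuinely delicate part to be the algebraic verification that the UMF boundary conditions eliminate all unknown input boundary terms and return $z$ exactly; everything downstream is substitution and Cauchy--Schwarz.
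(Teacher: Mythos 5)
Your overall route coincides with the paper's: modulate the input--output operator equation with the UMF, extract an exact algebraic representation of $z$, subtract the estimator, and finish with the triangle inequality, Cauchy--Schwarz, and Assumption~\ref{as:disturb}. The error decomposition $e_z=\lr{L^\ast\vp}{\nu}+\lr{\cE^\ast\vp}{d}$ and the final bound are exactly those in the paper's proof.

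The one step you yourself flag as the crux is, however, stated incorrectly. For a UMF the boundary-coefficient matrices do \emph{not} collapse the way you predict: $\Delta$ becomes a unit lower-triangular Toeplitz matrix built from $a_{n-1},\dots,a_1$ (not a signed permutation of the identity), and $\Gamma_\cB^j,\Gamma_\cE^l$ become \emph{nonzero} strictly lower-triangular Toeplitz matrices. In particular $\Delta Y\neq z$: in the OCF \eqref{eq:OCF} the inputs and disturbances enter every row, so for instance $z_2=\dot y+a_{n-1}y-B_{n-1}^\top u-E_{n-1}^\top d$, and the components $z_2,\dots,z_n$ genuinely contain input and disturbance derivatives. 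The unknown boundary data $U_j,D_l$ therefore do survive individually; what rescues the argument is the ``absorbed'' branch of your own hedge: the full combination
\[
\Delta Y-\sum_{j=1}^m\Gamma_\cB^jU_j-\sum_{l=1}^p\Gamma_\cE^lD_l=z
\]
is precisely the classical differential parameterization of the OCF state, which is what the paper verifies (its Toeplitz display together with \eqref{eq:state_mod_form}). Substituting this combination into \eqref{eq:shift_bc} still yields $z=-\lr{L^\ast\vp}{y}+\lr{\cB^\ast\vp}{u}+\lr{\cE^\ast\vp}{d}$ --- the signs conspire so that your incorrect intermediate claim and the correct one produce the same identity --- so everything downstream stands. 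But as written, the step ``show $\Delta Y(t)=z(t)$ and that the $\Gamma$-telescoping vanishes'' would fail, and the proof needs the combined boundary-term identity in its place.
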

\begin{proof}
    Applying UMF $\vp$ to the equivalent input-output form \eqref{eq:inoutop}, instead of their shape in \eqref{eq:shift_bc} resulting from the LMF, the insertion of the specific unit-vector-like boundary conditions leads to the coefficient matrices being simplified to the following Toeplitz structures:
    \renewcommand*{\arraystretch}{1}
    \[
        \begin{aligned}
            \Delta(t) = \begin{pmatrix}
                1&0&\cdots&0\\ a_{n\mi1}(t)&1&\ddots&\vdots\\\vdots&\ddots&\ddots&0\\ a_1(t)&\cdots&a_{n\mi1}(t)&1
            \end{pmatrix}\,, \hspace{2cm} \\[0.5ex]
            \begingroup
            \arraycolsep=1pt
                \Gamma_\cB^j(t) \!\!=\!\! \begin{pmatrix}
                0&\cdots&0\\ B_{j,n\mi1}&\ddots&\vdots\\\vdots&\ddots&0\\ B_{j,1}&\cdots&B_{j,n\mi1}
            \end{pmatrix}\!,
                \Gamma_\cE^j(t) \!\!=\!\! \begin{pmatrix}
                0&\cdots&0\\ E_{j,n\mi1}&\ddots&\vdots\\\vdots&\ddots&0\\ E_{j,1}&\cdots&E_{j,n\mi1}
            \end{pmatrix}\!.
            \endgroup
        \end{aligned}
    \]
    This corresponds to the classical differential parameterization of the ONF state
    \[
        z=\Delta(t) Y-\sum_{j=1}^{m}\,\Gamma_\cB^j(t)\,U_j(t)-\sum_{l=1}^{p}\,\Gamma_\cE^l(t)\,D_l(t)\,.
    \]
    The shift by partial integration can then be sorted in the opposite way as demonstrated in Lemma \ref{lem:shift} by separating the BCs with respect to the MF derivatives, that is
    \begin{align}
        &\lr{\vp^\top\!}{Ly-\cB u-\cE d}
        =\lr{L^\ast\vp}{y}-\lr{\cB^\ast\vp}{u}-\lr{\cE^\ast\vp}{d} \nonumber\\
        &
        \!\!+\!\!\left[\left((\mi1)^{n\mi1}\vp^{(n\mi1)}|\cdots|\!\mi\!\dot\vp|\vp\right)\!\!\!\!
        \vphantom{\begin{pmatrix} y\\\vdots\\y\end{pmatrix}}\right.
        \underbrace{\begin{pmatrix}
            L_ny \\ L_{n\mi1}y\!\mi\!\cB_{n\mi1}u\!\mi\!\cE_{n\mi1}d \\ \vdots \\ L_1y\!\mi\!\cB_1u\!\mi\!\cE_1d
        \end{pmatrix}}_{=z(\tau-t+T)}\!
        \left.\vphantom{\begin{pmatrix} y\\\vdots\\y\end{pmatrix}}\!\right]^T_0\nonumber\\
        &\overset{\eqref{eq:umfcond}}{=}\lr{L^\ast\vp}{y}-\lr{\cB^\ast\vp}{u}-\lr{\cE^\ast\vp}{d}+z=0\,. \label{eq:state_mod_form}
    \end{align}
    This allows for expressing the unknown state $z$ in terms of modulated input and output signals without need to consider the boundary terms from \eqref{eq:shift_bc}.
    Without knowledge on the disturbance $d$ and only resorting to the noisy measurement signal $\tilde{y}=y+\nu$, the state estimate is selected as stated in \eqref{eq:mfobs}.
    This choice results in the error equation
    \[
        \begin{aligned}
            e_z=&\,z-\hat{z}=-\lr{L^\ast\vp}{y}+\lr{\cB^\ast\vp}{u}+\lr{\cE^\ast\vp}{d} \\
        &\hspace{3.5cm}+\lr{L^\ast\vp}{\tilde{y}}-\lr{\cB^\ast\vp}{u} \\
        =&\,-\lr{L^\ast\vp}{y}+\lr{L^\ast\vp}{\tilde{y}}+\lr{\cE^\ast\vp}{d}\\
        =&\,\lr{L^\ast\vp}{\nu}+\lr{\cE^\ast\vp}{d}\,.
        \end{aligned}
    \]
    Thus, the error norm is bounded accordingly:
    \begingroup
    \allowdisplaybreaks
    \begin{align*}
        &\|e_z\|_2\leqslant\left\|\mh{\cE^\ast(\tau)[\vp](\sigma)\,d(\tau)}\right\|_2  \\[-1.0ex]
        &\hspace{2.8cm}+\left\|\mh{L^\ast(\tau)[\vp](\sigma)\,\nu(\tau)}\right\|_2 \\[-1.0ex]
        &\leqslant\mh{\left\|\cE^\ast(\tau)[\vp](\sigma)\,d(\tau)\right\|} \\[-1.0ex]
        &\hspace{2.8cm}+\mh{\left\|L^\ast(\tau)[\vp](\sigma)\,\nu(\tau)\right\|}\\
        &\overset{\text{H\"older}}{\leqslant}\!\!\sqrt{\int_0^T\left\|\cE^\ast(\sigma\pl t\!\mi\! T)[\vp](\sigma)\right\|^2\mathrm{d}\sigma}\,\,\|d\|_{\cL2[t-T,t]} \\[-1.0ex]&\hspace{1.1cm}+\sqrt{\int_0^T\left\|L^\ast(\sigma\pl t\!\mi\! T)[\vp](\sigma)\right\|^2\mathrm{d}\sigma}\,\|\nu\|_{\cL2[t-T,t]} \\
        &\overset{\text{As.\ref{as:disturb}}}{\leqslant}\|\cE^\ast(\cdot\,\pl t\!\mi\! T)\varphi\|_{\cL2[0,T]}\,\bar{d}+\|L^\ast(\cdot\,\pl t\!\mi\! T)\varphi\|_{\cL2[0,T]}\,\bar{\nu}
    \end{align*}
    \endgroup
    with the argument substitution $\sigma=\tau-t+T$.
\end{proof}
\vspace{0.5ex}
This theorem demonstrates the straightforward analysis for such an algebraic scheme on a finite time horizon.
In \cite{Byrski03}, it is shown that all algebraic estimators for LTI systems take the form of inner products of particular kernel with the input and output signals which is consistent with \textcolor{black}{\eqref{eq:mfobs}}.
Compared to the exact state observer there, the stated theorem provides a simpler structure for calculating the kernels as well as for quantifying the influence of uncertainties.
\vspace{0.5ex}
\begin{remark}[Transformation of observable system]\label{rem:obs_trafo} To apply the observer scheme in Theorem \ref{thm:mf_observer} to the more general observable system \eqref{eq:sys}, the transformation to time-variant observability companion form has to be performed in the $x$-coordinates. Accordingly, the transformation $P$ has to be included in the observer calculation \eqref{eq:mfobs} as per
    \[
        \hat{x}=P(t)\hat{z}=P(t)\big(-\lr{L^\ast\varphi}{\tilde{y}}+\lr{\cB^\ast\varphi}{u}\big)
    \]
    resulting in the adapted error bound $e_x=x-\hat{x}=P(t)e_z$.
    The involved Lie derivatives rely on sufficiently smooth matrix functions $A(\cdot),B(\cdot),C(\cdot)$ and feature a similar structure for systems with output nonlinearities.
\end{remark}

Despite the property of the algebraic observers to obtain an estimate in time prescribed by the horizon length, the analysis of dynamic systems with cross-couplings to the MF estimator turns out to be challenging.
The $\cL2$ bound investigation constitutes an important step towards characterizing the stability of the error system propagation with a more comprehensive architecture. It prepares the convergence or contraction analysis as well as subsequent robust design methods.

\section{Sampled-data Estimation via Output Predictor Closed-loop}
\label{sec:sd_estimation}

The class of sampled-data systems with additive output nonlinearaties is inspired by the time-invariant problem in \cite{MazencMN20} and the time-variant extension in the subsequent contribution \cite{MazencMN22}. The authors also considered disturbances besides other advanced aspects.
Throughout this work, the state estimation problem is addressed for the sampled-data systems of the form
\begin{equation}
    \label{eq:sd_sys}
 \left\{\begin{array}{l}
        \dot{x}=A(t)x+\phi(y,u,t)+Ed \\
        \tilde{y}=Cx(t_i)+\nu=y(t_i)+\nu\quad\forall t\in[t_i,t_{i+1})
    \end{array}\right.
\end{equation}
with bounded dynamic matrix-valued function of time, $A\in\cC^n(\mathbb{R}^+_0,\mathbb{R}^{n\times n})$, constant output matrix $C\in\mathbb{R}^{1\times n}$ and disturbance matrix $E\in\mathbb{R}^{n\times p}$ as well as the bounded continuous input-output nonlinearity $\phi:\mathbb{R}\times\mathbb{R}^m\times\mathbb{R}^+_0\rightarrow\mathbb{R}^n$.
The sampling instants $t_i,i\in\mathbb{N}$, are potentially distributed in a non-equidistant manner.
In order to allow for a structured observer design and a thorough stability analysis, the system is assumed to be restricted by the following characteristics that are common for similar approaches such as \cite{MazencMN22}.
\begin{ass}\label{as:sd_sys}
    System \eqref{eq:sd_sys} fulfills the conditions:
    \begin{enumerate}
        \item $\big(C,A(t)\big)$ is strongly observable such that $\exists P\in\cC^1(\mathbb{R}^+_0,\mathbb{R}^{n\times n})$ Lyapunov transformation w.r.t. the linear part as a regular backward-transform for the system from its OCF (see \eqref{eq:OCF} for linear case),
        \item The sampling periods are bounded such that\\
        \hspace*{0.7cm}$\exists\,\overline{T},\underline{T}>0\,\forall i\in\mathbb{N}:\underline{T}\leqslant t_{i+1}-t_i\leqslant\overline{T}$,
        \item The function $\phi(\cdot,u,t)$ is Lipschitz-continuous $\forall u(t)\in\mathbb{R}^m,t\geqslant0$, thus 
        $\exists L_\phi>0:$\\
        \hspace*{0.7cm}$\|\phi(y_1,u,t)-\phi(y_2,u,t)\|\leqslant L_\phi\|y_1-y_2\|~\forall y_{k}$\,.
\end{enumerate}
\end{ass}
\vspace{0.5ex}
For the algebraic observer design, nonlinearity $\phi$ is treated as $n$ known multiple inputs with the input matrix $B=I$, thus resulting in the following OCF parameterization and the corresponding differential operators: 
\begin{align}
    &a(t)\!=\!\mi P^{\mi1}(AP\!\mi\!\dot P)e_1,B_o(t)\!=\!P(t)\inv,E_o(t)\!=\!P(t)\inv E \nonumber\\[0.5ex]
     &\qquad\quad\Rightarrow L(t)[\,\cdot\,]=\big(\frac{\dd^{n}}{\dd t^{n}}[\,\cdot\,]+\cD (a(t)[\,\cdot\,])\big), \label{eq:sd_operators}\\[0.5ex]
     &\cB(t)[\,\cdot\,]=\cD \big(P(t)^{\mi1}[\,\cdot\,]\big),\,
     \cE(t)[\,\cdot\,]=\cD \big(P(t)^{\mi1}E[\,\cdot\,]\big) \nonumber
\end{align}
appearing in the input-output representation of system \eqref{eq:sd_sys} analogous to the argumentation related to the operator form \eqref{eq:inoutop}.\\
The central observer algorithm is composed of the metho\-dology presented in Theorem \ref{thm:mf_observer} in combination with the dynamic output prediction scheme explained in \cite{MazencMN20}.
Determining the convergence behavior quantitatively, the stability depends on two sources from different system components which are denoted by the following relevant coefficients:
\begin{itemize}
    \item Static model parameters (system input and output): \\ \hspace*{0.7cm} $K_1=\|C\|_2,K_2=\|CE\|_2$,
    \vspace{0.5ex}
    \item Varying model parameters (system dynamics): \\ \hspace*{0.7cm}
    $K_3=\|CA(\cdot)P(\cdot)\|_{{{\cLi(\mathbb{R}_0^+)}}}$
    \vspace{0.5ex}
    \item Local MF kernel gains (design) on moving horizon: \\ \hspace*{0.7cm}$\eta_a(t)=\|L^\ast(\cdot+t-T)[\vp]\|_{\cLi[0,T]},$\\
    \hspace*{0.7cm}$\eta_\phi(t)=\|\cB^\ast\vp\|_{\cLi[0,T]},$\,
    $\eta_d(t)=\|\cE^\ast\vp\|_{\cLi[0,T]}$.
    \vspace{0.5ex}
     \item Global MF kernel gains (design): \\ \hspace*{0.7cm}$\bar\eta_a=\|\eta_a\|_{\cLi(\mathbb{R}_0^+)},
    \bar\eta_\phi=\|\eta_\phi\|_{\cLi(\mathbb{R}_0^+)},$\\
    \hspace*{0.7cm}$\bar\eta_d=\|\eta_d\|_{\cLi(\mathbb{R}_0^+)}$.
\end{itemize}
The existence of such bounds on the compact moving horizon interval is guaranteed by the boundedness of $A(t)$ and the strong observability property from Assumption \ref{as:sd_sys}.
In the following theorem, convergence and estimation error bounds are provided in addition to the actual state estimator.
\vspace{0.5ex}
\begin{theorem}[Sampled-data state estimation]\label{thm:sd_observer}
    For a given UMF $\varphi:[0,T]\rightarrow\mathbb{R}^n$ of order $n$, the sampled-data system \eqref{eq:sd_sys} with constant initial function fulfilling Assumption \ref{as:sd_sys} and disturbances $d,\nu$ fulfilling Assumption \ref{as:disturb}, the state $x$ is reconstructed by
    \begin{equation}\label{eq:sdobs}
        \begin{aligned}
            \left\{\begin{array}{l}
                \dot{\hat{y}}=\,CA(t)\hat{x}+C\phi(\hat{y},u,t)\,\forall t\in[t_i,t_{i+1})\,,\\
                \hat{y}(t_i)=\,\tilde{y}(t_i)\,\forall t=t_i\,,\\
                \hat{x}=P(t)\big(-\lr{L^\ast\varphi}{\hat{y}}+\lr{\cB^\ast\varphi}{\phi(\hat{y},u,t)}\big)\,.
            \end{array}\right.
        \end{aligned}
    \end{equation}
    If the following inequality holds, that is
    \begin{equation}\label{eq:stab_crit}
\overline{T}\big(\underbrace{K_1L_\phi+K_3T(\bar\eta_a+\bar\eta_\phi L_\phi)}_{=:\lambda(\vp)}\big)<1
    \end{equation}
    where $T^\ast=\overline{T}+T$, then $\forall t\geqslant T^\ast$ the output prediction error $e_y(t)=y-\hat{y}$ is bounded by
    \begin{equation*}
        \|e_y(t)\|\leqslant\hspace{-1ex}\sup_{\tau\in[0,T^\ast]}\hspace{-0.5ex}\|e_y(\tau)\|\cdot\exp{\left[\frac{\ln(\overline{T}\lambda)}{T}(t\mi T^\ast)\right]}
        +\frac{W(t)}{1-\overline{T}\lambda}
    \end{equation*}
    with the disturbance impact
    \begin{equation*}
        W(t)=\|\nu\|_{\cLi[t\mi\overline{T}\!,t]}+\overline{T}\big(K_2+K_3T\,\bar\eta_d\big)\|d\|_{\cLi[t\mi T^\ast\!,t]}\,.
    \end{equation*}
    Furthermore, $\forall t\geqslant T^\ast+T$ the state estimation error $e_x(t)=x-\hat{x}$ satisfies 
    \begin{align}
        &\|e_x(t)\|\leqslant\left((\bar\eta_a\!+\!\bar\eta_\phi L_\phi)\!\left(\alpha_x\exp\!\left[{\frac{\ln(\overline{T}\lambda)}{T^\ast}(t\!\mi\! T^\ast\!\!\!\mi\! T)}\right]\right.\right.\nonumber\\
        &\hspace{2em}\left.\left.\hspace{0.2cm}+\frac{\|W\|_{\cLone[t-T,t]}}{1-\overline{T}\lambda}\right)+\bar\eta_d\|d\|_{\cLone[t-T,t]}\right)\|P\|_{\cLi} \label{eq:errorx}
    \end{align}
    with the positive constant
    \begin{equation*}
        \alpha_x=\frac{T^\ast}{\ln(\overline{T}\lambda)}\left((\overline{T}\lambda)^{\frac{T}{T^\ast}}-1\right)\sup_{\tau\in[0,T^\ast]}\|e_y(\tau)\|
    \end{equation*}
    guaranteeing an exponentially converging state estimate when \eqref{eq:stab_crit} is fulfilled.
\end{theorem}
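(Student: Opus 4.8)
The plan is to combine the fixed-time algebraic reconstruction from Theorem \ref{thm:mf_observer} (now applied on the moving horizon $[t-T,t]$ with $\phi$ playing the role of $n$ known inputs) with a discrete-time contraction argument for the prediction error $e_y$, and then to feed the resulting bound on $e_y$ back through the observer formula \eqref{eq:sdobs} to obtain the state-error bound \eqref{eq:errorx}. First I would derive the error dynamics of the output predictor. On each interval $[t_i,t_{i+1})$ we have $\dot e_y = CA(t)(x-\hat x) + C(\phi(y,u,t)-\phi(\hat y,u,t)) + CEd$, with $e_y(t_i)=y(t_i)-\tilde y(t_i)=-\nu$. The key is to estimate $x-\hat x$ at an arbitrary time $\tau$ inside the current sampling interval: using the exactness relation \eqref{eq:state_mod_form} in the $\phi$-as-input parameterization, $x-\hat x = P(\tau)\big(-\lr{L^\ast\vp}{y-\hat y} + \lr{\cB^\ast\vp}{\phi(y,u,\cdot)-\phi(\hat y,u,\cdot)} + \lr{\cE^\ast\vp}{d}\big)$, where the inner products are over $[\tau-T,\tau]$. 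Bounding the modulation operators by their $\cLi$ kernel gains $\eta_a,\eta_\phi,\eta_d$ and using the Lipschitz property (Assumption \ref{as:sd_sys}.\textit{iii}) converts $\|\phi(y,u,\cdot)-\phi(\hat y,u,\cdot)\|$ into $L_\phi|e_y|$, so that $\|x(\tau)-\hat x(\tau)\| \leqslant \|P\|_{\cLi}\big((\bar\eta_a+\bar\eta_\phi L_\phi)\sup_{[\tau-T,\tau]}|e_y| + \bar\eta_d\|d\|_{\cL1[\tau-T,\tau]}\big)$ (or with $\cLi$-norms, matching $W$). This is the estimate that produces the $K_3 T(\bar\eta_a+\bar\eta_\phi L_\phi)$ term once multiplied by $K_3=\|CAP\|_{\cLi}$ and integrated over a sampling interval of length at most $\overline T$; the direct $C(\phi-\phi)$ term gives the $K_1 L_\phi$ contribution, and $CEd$ together with the $\cE^\ast\vp$ term gives the disturbance pieces in $W$.

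Second, I would integrate the $e_y$-dynamics over $[t_i,t_{i+1})$. Writing $t\in[t_i,t_{i+1})$ and integrating from $t_i$, with $e_y(t_i)=-\nu$, yields $|e_y(t)| \leqslant |\nu| + \int_{t_i}^{t}\big(K_1 L_\phi |e_y(s)| + K_3\|x(s)-\hat x(s)\| + K_2 |d(s)|\big)\dd s$. Substituting the $\tau$-estimate above and recognizing that the window $[\tau-T,\tau]$ for $\tau\in[t_i,t_{i+1})\subset[t_i,t_i+\overline T)$ is contained in $[t-T-\overline T, t]$, one reaches (after taking suprema and using $t_{i+1}-t_i\leqslant\overline T$) an inequality of the form $\sup_{[t_i,t_{i+1})}|e_y| \leqslant W(t) + \overline T\,\lambda(\vp)\,\sup_{[t_i-T,t_{i+1})}|e_y|$, with $\lambda$ exactly as in \eqref{eq:stab_crit}. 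Because $\lambda$ collects the self-coupling $K_1L_\phi$ plus the algebraic feedback $K_3T(\bar\eta_a+\bar\eta_\phi L_\phi)$, and because the look-back window extends at most $T$ into the past, this is precisely a delayed discrete recursion: the sup of $|e_y|$ on the $i$-th interval is dominated by $\overline T\lambda$ times its sup over a window reaching back by $T$, plus a disturbance term.

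Third comes the trajectory-based step à la \cite{MazencM15}. Set $\rho(t)=\sup_{s\in[t-T,t]}|e_y(s)|$ (or a windowed sup adapted to the look-back). The recursion above gives, roughly, $\rho(t+\underline T) \leqslant \overline T\lambda\,\rho(t) + \sup W$ for $t\geqslant T^\ast$, and since $\overline T\lambda<1$ by \eqref{eq:stab_crit} the scalar comparison lemma (Lemma/Claim of \cite{MazencM15}) gives geometric decay of $\rho$ with ratio $\overline T\lambda$ per step, i.e. exponential decay with rate $\ln(\overline T\lambda)/T$ after normalizing the step to $T$, plus the forced term $\sup W/(1-\overline T\lambda)$. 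Unwinding the sup over $[0,T^\ast]$ as initial data yields exactly the stated bound on $\|e_y(t)\|$ for $t\geqslant T^\ast$, with $W(t)$ as in the statement (the $\nu$ contributes the constant $\|\nu\|_{\cLi[t-\overline T,t]}$, the $d$ contributes $\overline T(K_2+K_3 T\bar\eta_d)\|d\|_{\cLi[t-T^\ast,t]}$). Finally, plugging this $e_y$-bound back into $\|e_x(t)\| = \|x(t)-\hat x(t)\| \leqslant \|P\|_{\cLi}\big((\bar\eta_a+\bar\eta_\phi L_\phi)\sup_{[t-T,t]}|e_y| + \bar\eta_d\|d\|_{\cL1[t-T,t]}\big)$ and inserting the exponential-plus-forced expression for $\sup_{[t-T,t]}|e_y|$ gives \eqref{eq:errorx}; the constant $\alpha_x$ arises from integrating/supremizing the exponential envelope of $e_y$ over a window of length $T$ against the decay rate $\ln(\overline T\lambda)/T^\ast$, which after elementary manipulation is the displayed $\tfrac{T^\ast}{\ln(\overline T\lambda)}\big((\overline T\lambda)^{T/T^\ast}-1\big)\sup_{[0,T^\ast]}|e_y|$.

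The main obstacle I expect is bookkeeping the nested time windows so that the single constant $\lambda$ in \eqref{eq:stab_crit} really closes the loop: the algebraic reconstruction at time $\tau$ looks back by $T$, the predictor integrates forward over at most $\overline T$, and the recursion is on windowed suprema — getting the recursion into the clean delayed form $\rho(t+\underline T)\leqslant \overline T\lambda\,\rho(t)+\sup W$ (so that $\overline T\lambda<1$ is exactly the contraction condition, not some larger composite) requires care in choosing the comparison functional and in verifying that the ``initialization'' horizon is $T^\ast=\overline T+T$ and not longer. A secondary technical point is the dependence on the constant initial function assumption, which is what makes $\sup_{[0,T^\ast]}|e_y|$ finite and well-defined before the algebraic observer has a full horizon of data; one must check that the predictor equation is well-posed (Lipschitz $\phi$) on $[t_0,t_0+T^\ast]$ and that the estimate $\hat x$ — hence the feedback into $\dot{\hat y}$ — stays bounded there.
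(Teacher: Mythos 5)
Your proposal follows essentially the same route as the paper's proof: express $e_x$ exactly through the modulation operators applied to $e_y$, $e_\phi$ and $d$, integrate the predictor error dynamics over a sampling interval to obtain the delayed sup-norm recursion with gain $\overline{T}\lambda$, invoke the trajectory-based lemma of \cite{MazencM15}, and substitute the resulting $e_y$ envelope back into the algebraic reconstruction to produce $\alpha_x$ and \eqref{eq:errorx}. The only slip is a harmless double-counting of the transformation gain (you bound $\|x-\hat x\|$ with a factor $\|P\|_{\cLi}$ and then multiply by $K_3=\|CAP\|_{\cLi}$, whereas $P$ is already absorbed into $K_3$ in the $e_y$ recursion and appears separately only in the final $e_x$ bound), which does not affect the structure of the argument since you target the exact $\lambda$ of \eqref{eq:stab_crit}.
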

Compared to Theorem \ref{thm:mf_observer}, the convergence rate is asymptotic as it is bounded by an exponential function due to the dynamic predictor approach. However, combining the observer \eqref{eq:mfobs} with the output prediction scheme is straightforward without adding further complexity for the implementation. The only challenge is the convoluted structure of the resulting closed-loop because of the cross-coupling between the algebraic moving horizon estimator and the output ODE. The related stability analysis is the main subject of this section.\\
The concept of the trajectory based stability criterion established in \cite{MazencM15} deals with nonlinear systems including delays and state history. Its alternative perspective is predicated on implicit inequalities with function norms over the considered finite time horizon rather than on energy-like derivatives. This enables the inclusion of algebraic approaches such as presented in Section \ref{sec:algobs}.\\
The starting point for the argumentation is similar to the ideal measurement case discussed for Theorem \ref{thm:mf_observer}. An exact representation of the state solution $x(t)$ is constructed via the modulation of the input-output behavior related to system \eqref{eq:sd_sys} with the nonlinearity $\phi$ treated as a known input.

\begin{proof}
Consider the state estimation error with respect to observer \eqref{eq:sdobs} and use the analog to the state representation from \eqref{eq:state_mod_form} to obtain
\begin{align}
    e_x=&\,x-\hat{x}=P(t)(z-\hat{z})\nonumber\\
    =&P(t)\big(-\lr{L^\ast\varphi}{y}+\lr{\cB^\ast\vp}{\phi(y,u,t)}+\lr{\cE^\ast\vp}{d} \nonumber\\
    &\quad+\lr{L^\ast\varphi}{\hat{y}}-\lr{\cB^\ast\vp}{\phi(\hat{y},u,t)}\big) \nonumber\\
    =&P(t)\big(\mi\lr{L^\ast\varphi}{e_y}+\lr{\cB^\ast\vp}{\phi(y,u,t)- \phi(\hat{y},u,t)}\nonumber\\
    &\hspace{2cm}+\lr{\cE^\ast\vp}{d}\big)\,. \label{eq:state_error}
\end{align}
The dynamics of the output error $e_y=y-\hat{y}$ then obey
\[
    \begin{aligned}
        \dot{e}_y=&\,CA(t)\overbrace{(x\!-\!\hat{x})}^{e_x}+C\overbrace{\left(\phi(y,u,t)-\phi(\hat{y},u,t)\right)}^{e_\phi}+CEd
        \\
        =&\,CA(t)P(t)\left(-\!\lr{L^\ast\vp}{e_y}+\lr{\cB^\ast\vp}{e_\phi}+\lr{\cE^\ast\vp}{d}\right)\\
        &\quad+C\,e_\phi+CE\,d\,,\\
        e_y(t_i)=&y-\tilde{y}=-\nu\,.
    \end{aligned}
\]
Subsequent integration of $\dot{e}_y(s)$ over $[t_i,t)$ results in
\begin{equation}\label{eq:ey_integr}
    \begin{aligned}
        \int_{t_i}^{t}\dot{e}_y\dd s={e}_y+\nu(t_i)=C\int_{t_i}^{t}\big[e_\phi(s)+Ed(s)+\\
        A(t)P(t)\,{\big(-\lr{L^\ast\varphi}{e_y}+\lr{\cB^\ast\vp}{e_\phi}+\lr{\cE^\ast\vp}{d}\big)}\big]\,\dd s
    \end{aligned}
\end{equation}
which contains modulated terms being integrated for a second time. In order to replace those during the central bound argumentation, an exemplary auxiliary calculation step of integrating the modulation operator with respect to the time-variant kernel $L^\ast(t)\vp$ applied to the signal $\xi\in\cLone^\text{loc}\cap\cL2^\text{loc}$ is performed while considering the multiplication with a bounded matrix-valued function $M:\mathbb{R}_0^+\rightarrow\mathbb{R}^{n\times n}$:
\begingroup
\allowdisplaybreaks
\begin{align}
    &\Big\|\int_{t_i}^{t} M(s) \lr{L^\ast\vp}{\xi} \dd s\Big\|\nonumber\\[-0.5ex]
    &\hspace{3ex}=\Big\|\int_{t_i}^{t}M(s)\!\int_{s\mi T}^{s}\!\! L^\ast(\tau)[\vp](\tau\!\mi\!s\pl T)\xi(\tau) \dd\tau\,\dd s\Big\| \nonumber\\
    &\hspace{3ex}\leqslant\int_{t_i}^{t}\left\|M(s)\int_{s\mi T}^{s} L^\ast(\tau)[\vp](\tau\!\mi\!s\pl T)\xi(\tau) \dd\tau\right\|\dd s \nonumber\\
    &\hspace{1ex}\overset{\text{H\"older}}{\leqslant}
    \sqrt{\int_{t_i}^{t}M(s)^\top M(s)\dd s} \nonumber\\
    &\hspace{1.2cm}\sqrt{\int_{t_i}^{t}\left(\int_{s\mi T}^{s} L^\ast(\tau)[\vp](\tau\!\mi\!s\pl T)\xi(\tau) \dd\tau\right)^2\!\dd s}\nonumber\\
    &\hspace{1ex}\overset{\text{Jensen}}{\leqslant}\sqrt{t-t_i}\,\|M\|_{\cLi}\nonumber\\
    &\hspace{1.2cm}\sqrt{\int_{t_i}^{t}T\int_{s\mi T}^{s} \|L^\ast(\tau)[\vp](\tau\!\mi\!s\pl T)\|^2\,\|\xi(\tau)\|^2 \dd\tau\dd s}\nonumber\\
    &\hspace{3ex}\overset{\text{H.}}{\leqslant}\sqrt{t-t_i}\,\|M\|_{\cLi}\sqrt{T}\nonumber\\
    &\hspace{1cm}\usqrt{\int_{t_i}^{t}\ubrace{\|L^\ast(\cdot\pl s\!\mi\!T)[\vp]\|^2_{\cLi[0,T]}}{\eta_a(s)^2}\int_{s\mi T}^{s} \|\xi(\tau)\|^2 \dd\tau\dd s}\nonumber\\
    &\hspace{3ex}\leqslant\sqrt{(t-t_i)T}\,\|M\|_{\cLi}\sqrt{\bar{\eta}_a^2\!\int_{t_i}^{t}T\,\|\xi\|_{\cLi[s\mi T,s]}^2\dd s}\nonumber\\
    &\hspace{3ex}\leqslant\sqrt{(t-t_i)}\,T\,\|M\|_{\cLi}\bar{\eta}_a\sqrt{(t-t_i)\|\xi\|_{\cLi[t_i\mi T,t]}^2}\nonumber\\
    &\hspace{3ex}\leqslant T\,\|M\|_{\cLi}\bar{\eta}_a(t-t_i)\|\xi\|_{\cLi[t_i\mi T,t]}.\nonumber
\end{align}
\endgroup
Using Assumption \ref{as:sd_sys} for bounding the reset interval, leads to the following norm consideration
\begin{equation}\label{eq:modint_bound}
    \Big\|\int_{t_i}^{t} M(s)\lr{L^\ast\vp}{\xi} \dd s\Big\|\leqslant
    \bar\eta_a\,{T}\,\|M\|_{\cLi}\,\overline{T} \|\xi\|_{\cLi[t\mi\overline{T}\mi T,t]}
\end{equation}
resulting in a fundamental $\cL2$ characterization which plays an essential role when bounding the error terms.
This procedure is performed for the other operator cases $\cB^\ast\vp$ and $\cE^\ast\vp$ analogously. The interval argument is omitted when considering the respective $\cLi$ norm in the following.

Combining these facts, the Euclidean norm $\|\cdot\|_2=\|\cdot\|$ of the output error $e_y$ from \eqref{eq:ey_integr} is bounded by
\begin{align*}
    &\|e_y\|\leqslant|\nu(t_i)|+\overbrace{\|C\|}^{=K_1}\int_{t_i}^{t}\|e_\phi\|\dd s+\overbrace{\|CE\|}^{=K_2}\int_{t_i}^{t}\|d\|\dd s \\
    &+\int_{t_i}^{t}\hspace{-1ex}\big\|CAP\big(\lr{L^\ast\varphi}{e_y}+\lr{\cB^\ast\vp}{e_\phi}+\lr{\cE^\ast\vp}{d}\big)\big\|\dd s\\
    &\overset{\text{\eqref{eq:modint_bound}}}{\leqslant}|\nu|+K_1\int_{t_i}^{t}\|e_\phi\|\dd s+K_2\int_{t_i}^{t}\|d\|\dd s\\
    &+T\overline{T}\underbrace{\|CAP\|_{\cLi}}_{=K_3}\Big(\bar\eta_a \|e_y\|_{\cLi[t\mi T^\ast,t]} +\bar\eta_\phi\|e_\phi\|_{\cLi[t\mi T^\ast,t]} \\[-3ex]
    &\hspace{3.5cm}+\bar\eta_d\|d\|_{\cLi[t\mi T^\ast,t]}\Big)\\
    &\overset{\text{As.\ref{as:sd_sys}}}{\leqslant}|\nu|+K_1L_\phi\int_{t\mi\overline{T}}^{t}\|e_y\|\dd s+K_2\int_{t\mi\overline{T}}^{t}\|d\|\dd s\\
    &+T\overline{T}K_3\Big(\bar\eta_a \|e_y\|_{\cLi[t\mi T^\ast,t]} +\bar\eta_\phi L_\phi\|e_y\|_{\cLi[t\mi T^\ast,t]} \\
    &\hspace{3.5cm}+\bar\eta_d\|d\|_{\cLi[t\mi T^\ast,t]}\Big)\\[-2.5ex]
    &\leqslant|\nu|+\overbrace{(t\mi t_i)}^{\leqslant \overline{T}}\hspace{-1ex}\sup_{s\in[t\mi\overline{T},t]}\hspace{-1ex}\big(K_1L_\phi\|e_y(s)\|+K_2\|d(s)\|\big)\\
    &+T\,\overline{T}K_3\!\!\sup_{s\in[t-T^\ast,t]}\hspace{-1ex}\big((\bar\eta_a+\bar\eta_\phi L_\phi)\|e_y(s)\|
    +\bar\eta_d\|d(s)\|\big)
\end{align*}
with maximum relevant horizon $T^\ast=\overline{T}+T$.
Increase the interval and take the supremum directly, thus
\[
    \sup_{\tau\in[t\mi\overline{T},t]}\hspace{-1ex}\|\xi(\tau)\|\,\leqslant\!\sup_{\tau\in[t\mi T^\ast,t]}\hspace{-1ex}\|\xi(\tau)\|
\]
which results in the following bound step:
\begin{align*}
    \|e_y\|\leqslant&\,\overline{T}\overbrace{\big(K_1L_\phi+K_3T(\bar\eta_a+\bar\eta_\phi L_\phi)\big)}^{\lambda(\vp)}\hspace{-1ex}\sup_{s\in[t-T^\ast\!\!,t]}\hspace{-1ex}\|e_y(s)\| \\
    &+\!\!\!\!\sup_{s\in[t-\overline{T},t]}\!\!|\nu(s)|+\overline{T}\big(K_2+K_3T\bar\eta_d\big)\hspace{-1ex}\sup_{s\in[t-T^\ast\!\!,t]}\hspace{-1ex}\|d(s)\|\,.
\end{align*}
As $\overline{T}\lambda<1$,the trajectory-based approach from \cite[Lem.1]{MazencM15} can be applied with its input-to-state stability (ISS) relation ending up to be $\forall t\geqslant T^\ast$:
\begin{align*}
    &\|e_y(t)\|\leqslant\sup_{\tau\in[0,T^\ast]}\|e_y(\tau)\|\cdot\exp{\left[\frac{\ln(\overline{T}\lambda)}{T^\ast}(t-T^\ast)\right]} \\
    &+\frac{1}{1\!\mi\!\overline{T}\lambda}\underbrace{\Big(\|\nu\|_{\cLi[t\mi\overline{T},t]}\!+\!\overline{T}\big(K_2\!+\!K_3T\,\bar\eta_d\big)\|d\|_{\cLi[t\mi T^\ast\!\!,t]}\Big)}_{W(t)}.
\end{align*}
To the end of state estimation, the bound on the observation error stated at the start of the proof  in \eqref{eq:state_error} may be expressed as the following applying a similar argumentation as in \eqref{eq:modint_bound}:
\begin{align*}
    \|e_x(t)\|\leqslant&\,\|P(t)\|\int_{t-T}^t\big(\eta_a(t)\|e_y(\tau)\|+\eta_\phi(t) L_\phi\|e_y(\tau)\|\\[-1ex]
    &\hspace{4.0cm}+\eta_d(t)\|d(\tau)\|\big)\dd\tau \\
    \leqslant&\, \|P(t)\|\int_{t-T}^t(\eta_a(t)+\eta_\phi(t) L_\phi)\cdot\\
    &\hspace{0.5cm}\cdot\left(\|e_y\|_{[0,T^\ast]}\,e^{\frac{\ln(\overline{T}\lambda)}{T^\ast}(\tau-T^\ast)} 
    +\frac{W(\tau)}{1-\overline{T}\lambda}\right)\dd \tau\\
    &\hspace{0.2cm}+\eta_d(t)\|P(t)\|\|d\|_{\cLone[t-T,t]} \\
    \overset{\text{H.}}{\leqslant}&\,\|P\|_{\cLi}\bigg[(\bar\eta_a+\bar\eta_\phi L_\phi)\left(\alpha_xe^{\frac{\ln(\overline{T}\lambda)}{T^\ast}(t\mi T^\ast\!\mi T)}\right.\\
    &\left.\hspace{1.4cm}+\frac{\|W\|_{\cLone[t-T,t]}}{1-\overline{T}\lambda}\right)+\bar\eta_d\|d\|_{\cLone[t-T,t]}\bigg]
\end{align*}
which holds for $t\geqslant T^\ast+T=\overline{T}+2T$ with coefficient originating from integration
\[
    0<\alpha_x=\frac{T^\ast\left((\overline{T}\lambda)^{\frac{T}{T^\ast}}-1\right)}{\ln(\overline{T}\lambda)}\sup_{\tau\in[0,T^\ast]}\|e_y(\tau)\|\,.
\]
Consequently, for the nominal case without disturbances $d,\nu\equiv0$, asymptotic convergence of the output prediction as well as the state estimator are guaranteed, hence
\[
    \begin{aligned}
        \|e_y\|\rightarrow0
        \Leftrightarrow \hat{y}\rightarrow y
        \quad\Rightarrow\quad {\|e_x\|\rightarrow0}
        \Leftrightarrow \hat{x}\rightarrow x\,.
    \end{aligned}
\]
This coincides with the statement in Theorem \ref{thm:sd_observer}.
\end{proof}
\vspace{1ex}
The convergence bounds provide valuable insight on the influence of the degrees of freedom represented by the MF tuning parameters. In the following remark, the consequences for different design choices are outlined.
\begin{remark}[Kernel design strategies]\label{rem:kernel_design}
    Besides its smoothness requirements, the UMF $\vp$ is mainly characterized by its boundary conditions which leaves the actually shape of the function and its resulting derivatives open for user selection.
    Consider the following cases with different specifications:
    \begin{itemize}
        \item The largest feasible sampling bound according to the stability goal is $\overline{T}<\frac{1}{\lambda}$ from \eqref{eq:stab_crit}. Thus, attenuating $\lambda$ is the main design goal of the sampled-data observer \eqref{eq:sdobs} for achieving higher robustness against the highest possible sampling bound $\overline{T}$.
        \item Reducing gains $\bar\eta_a=\|L^\ast\vp\|_{\cLi}$ and $\bar\eta_\phi=\|\cB^\ast\vp\|_{\cLi}$ also increases the stability margin $|1-\overline{T}\lambda|$ by minimizing the cost function $J(\vp)=\bar\eta_a+L_\phi\bar\eta_\phi$ over $\vp$ ultimately decreasing $\lambda(\vp)$ in the process.
        \item The measurement noise $\nu$ shows a direct impact on $e_y$ through the factor $\frac{1}{1-\overline{T}\lambda}$. Thus, a minimization of $J$ also leads to smaller amplification of $\nu$.
        Additionally, a direct multiplication of $J$ can be seen within the state reconstruction error bound $\|e_x\|$, further diminishing the impact on the estimate for smaller choices and allowing some filtering.
        \item On the other hand, mitigating the disturbance impact of $d$ can be achieved by reducing $\bar\eta_d=\|\cE^\ast\vp\|_{\cLi}$.
        \item The horizon length $T>0$ plays an important role in all the above considerations, especially as it directly determines the convergence rate of the algebraic observer part while directly appearing inside of $\lambda$. It is recommended to use it as a simple tool for intuitive final tuning.
    \end{itemize}
    With a given system parameterization $\{A(t),L_\phi,C,E\}$ of \eqref{eq:sd_sys}, these sensitivity measures can be optimized with respect to the kernel function $\vp$ which can be done in closed form for the LTI case.
\end{remark}
\vspace{0.5ex}
The subsequent numerical study illustrates the observer behavior and the impact of system and disturbance characterizations on the stability margins. 

\section{Implementation}
\label{sec:implementation}
In order to summarize the design and implementation procedure for the MF-based estimator for the sampled-data system described in Theorem \ref{thm:sd_observer}, this section discusses some of the practical aspects with regards to the numerical realization of the approach.
To demonstrate its functionality, the methodology is applied to an example problem and compared to a similar scheme.

\subsection{MF-based Observer Realization}
Considering the standard MF observer from \eqref{eq:mfobs}, it shall be noted that the major part of the necessary calculations is done offline, while during runtime only the integration with respect to the predefined kernels has to be implemented. This operation \eqref{eq:modop} can efficiently be realized via an FIR filter structure by discretization via numeric integration or, alternatively, in the continuous-time fashion established in \cite[Sec.4.2]{IonesiRJ19}.
Similarly, the extended version \eqref{eq:sdobs} shows the same structure with the addition of the low-order dynamic output predictor, which does not add much computational cost.\\
For a given system \eqref{eq:sd_sys} with known bounds as from Assumption \ref{as:sd_sys}, the preparatory design process contains the following calculation steps:
\vspace{-1ex}
\begin{enumerate}
    \item ONF: $P\Rightarrow a,B_o=P^{-1},E_o=P^{-1}E$,
    \item Determine UMF $\vp$ of order $n$ with BCs \eqref{eq:umfcond},
    \item \textit{Optional:} Optimizing kernel $\vp$ via $\min(\bar\eta_a+\bar\eta_\phi L_\phi)$ for stability margin or $\min\eta_d$ for robustness, 
    \item Determine data gains $L^\ast\vp(\tau)$ and $\cB^\ast\vp(\tau)$ (e.g. FIR),
    \item Check stability via condition of Theorem \ref{thm:sd_observer}.
\end{enumerate}
\vspace{-1ex}
All of these steps are performed offline. An example of a resulting kernel $\vp$ of order $n=2$ is presented in Figure~\ref{fig:umf}.
\begin{figure}[!t]
    \centering
    \includegraphics[width=0.99\columnwidth]{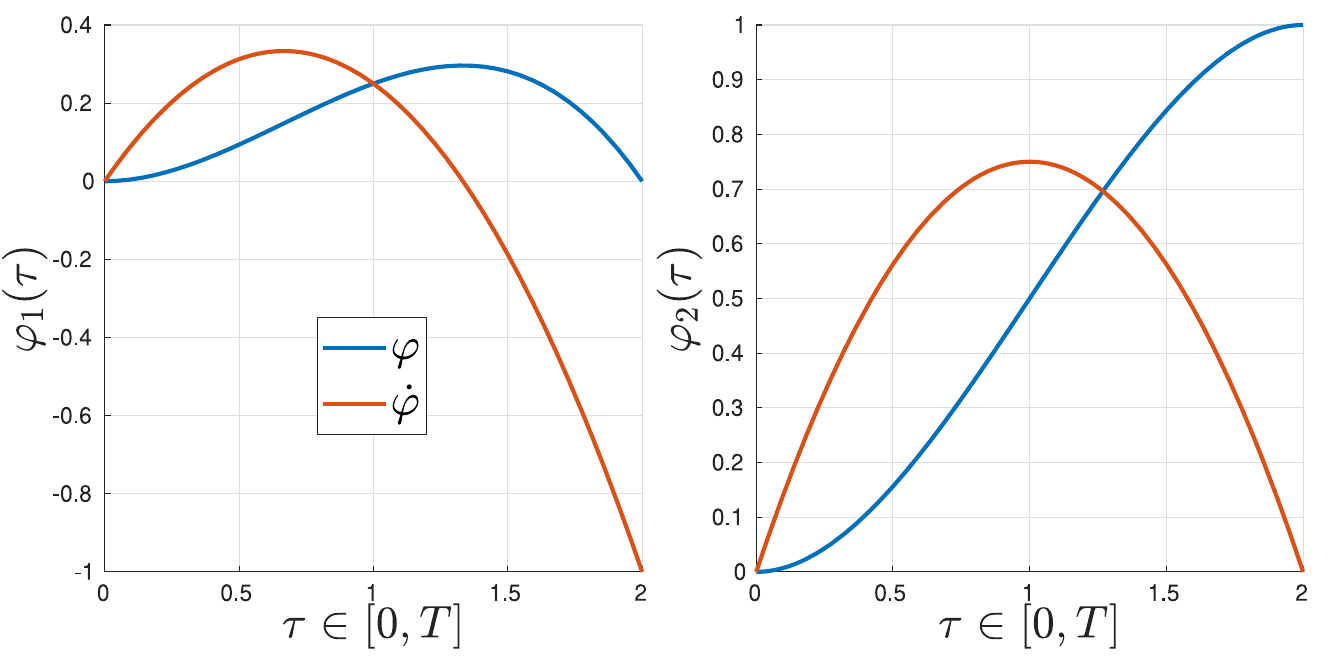}\vspace{-0.25cm}
    \caption{UMF kernel $\varphi(\tau)=[\varphi_1,\varphi_2]^\top$ of order 2 and its derivative $\dot{\vp}$ as defined in Equation \eqref{eq:umfcond} of Definition \ref{def:mf}.}
    \label{fig:umf}
\end{figure}
Note that the characteristic UMF BCs are fulfilled with $\vp(T)=e_2,\dot\vp(T)=-e_1$ guaranteeing the basic functionality of the estimator.
For illustrating the realization of observer \eqref{eq:sdobs}, Figure \ref{fig:block_diagr_impl} depicts its block diagram architecture.
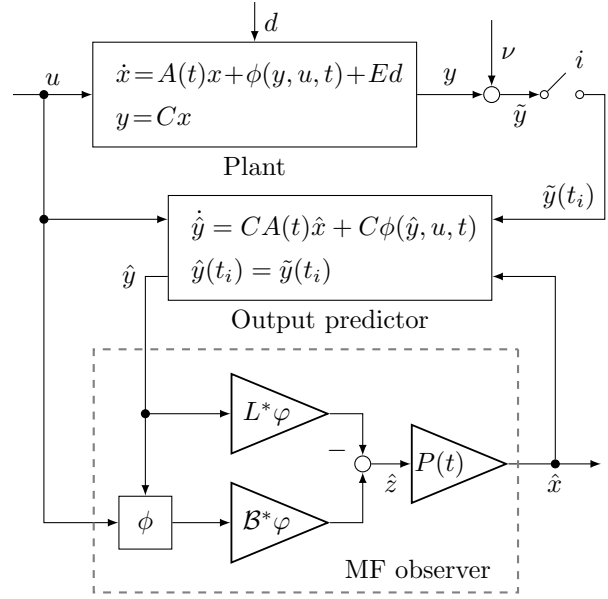
\begin{figure}[ht]
    \centering
    \tikzstyle{block} = [draw, fill=white, rectangle, minimum height=3em, minimum width=3em, anchor=center]
\tikzstyle{sum} = [draw, fill=white, circle, minimum height=0.6em, minimum width=0.6em, anchor=center, inner sep=0pt]
\tikzstyle{triang} = [draw, isosceles triangle, inner sep=1pt, line width = 0.7]

\begin{center}
    \begin{tikzpicture}[auto, scale=0.95, >=latex]
	
	\node at (0,0) (IN){};
	\node[block, right = 3em of IN] (SYS){~$\begin{array}{l}
             \dot{x}\!=\!A(t)x\!+\!\phi(y,u,t)\!+\!Ed \\
             y\!=\!Cx
        \end{array}$~};
	\node [below] at (SYS.south) {Plant};
        \node[sum, right = 2.5em of SYS] (SUMNU){};

        \node [above=1.5em of SYS] (DIN){};
        \node [above=2.5em of SUMNU] (NUIN){};
        \draw[->] (IN) -- node[above]{$u$} (SYS);
        \draw[->] (SYS) -- node[above]{$y$} (SUMNU);
        \draw[->] (DIN) -- node[right]{$d$} (SYS);
        \draw[->] (NUIN) -- node[right]{$\nu$} (SUMNU);

        \node[sum, right = 1.5em of SUMNU, minimum height=0.3em, minimum width=0.3em] (SWITCH1){};
        \node[sum, right = 1em of SWITCH1, minimum height=0.3em, minimum width=0.3em] (SWITCH2){};
        \node [above=0.5em of SWITCH2] (SWITCH3){$i$};
        \draw[->] (SUMNU) -- node[below]{$\tilde{y}$} (SWITCH1);
        \draw[-] (SWITCH1) -- (SWITCH3);
        
        \node[block] at ($(SYS.south)+(3em,-4em)$) (OUTPREDICTOR){~$\begin{array}{l}
             \dot{\hat{y}}=CA(t)\hat{x}+C\phi(\hat{y},u,t) \\
             \hat{y}(t_i)=\tilde{y}(t_i)
        \end{array}$~};
        \node [below] at (OUTPREDICTOR.south) {Output predictor};
        \node[sum, fill=black, minimum size=0.3em, inner sep=0pt, right = 1em of IN] (SPLITU){};
        \draw[->] (SWITCH2) -- +(1em,0) |- node[above left]{$\tilde{y}(t_i)$} (OUTPREDICTOR.10);
        \draw[->] (SPLITU) |- (OUTPREDICTOR.170);

        \node[triang, anchor=east] at ($(OUTPREDICTOR.south)+(0em,-4.5em)$) (LPHI){~$L^\ast\vp$~};
        \node[triang, below = 2em of LPHI] (PHIB){~$\cB^\ast\vp$~};
        \node[block, left = 2.2em of PHIB, minimum height=2em, minimum width=2em] (PHIFUN){$\phi$};
        \node[sum] at ($(LPHI)+(4em,-2em)$) (SUMMF){};
        \node[triang, right = 1.5em of SUMMF] (PGAIN){$P(t)$};

        \node[sum, fill=black, minimum size=0.3em, inner sep=0pt] at (OUTPREDICTOR.170-|SPLITU) (SPLITU2){};
        \node[sum, fill=black, minimum size=0.3em, inner sep=0pt] at (LPHI-|PHIFUN) (SPLITY){};
        \draw[->] (OUTPREDICTOR.190) -| node[left]{$\hat{y}$} (PHIFUN);
        \node[sum, fill=black, minimum size=0.3em, inner sep=0pt, right = 1.6em of PGAIN] (SPLITX){};
        
        \draw[->] (SPLITX) |-  (OUTPREDICTOR.350);
        \draw[->] (SUMMF) -- node[below]{$\hat{z}$} (PGAIN);
        \node[above left = -0.4em and 0.0em of SUMMF](){$-$};
        
        \draw[->] (LPHI) -| (SUMMF);
        \draw[->] (PHIB) -| (SUMMF);
        \draw[->] (PHIFUN) -- (PHIB);
        \draw[->] (SPLITU2) |- (PHIFUN);
        \draw[->] (SPLITY) |- (LPHI);

        \node[right = 3.5em of PGAIN] (XOUT){};
        \draw[->] (PGAIN) -- node[below]{$\hat{x}$} (XOUT);

        \draw[gray,thick,dashed] ($(PGAIN.north west)+(1.5,1.2)$)  rectangle ($(PHIFUN.south east)+(-1.1,-0.6)$);
        \node [below] at ($(PGAIN.south west)+(0.1,-0.8)$) {MF observer};
        
    \end{tikzpicture}
\end{center}
    \caption{Block diagram of implementation scheme for observer \eqref{eq:sdobs}.}
    \label{fig:block_diagr_impl}
\end{figure}
Since the modulation operator FIR filter structures are numerically realized with a memory block and a pre-calculated gain, they are represented as triangular gains emphasizing the simplicity and real-time computability of the MF method.\\

\subsection{LTI Simulation Example with Input-Output Injection and Measurement Noise}
The developed algorithm is assessed in the same context as the example system discussed in \cite{MazencMN20}, i.e. a comparison to a related approach is made. An LTI system is considered for demonstrating how many of the presented calculation steps developed the LTV case can be simplified substantially.
Consider the plant model
\[
    \dot x
    =\underset{A}{\underbrace{\begin{pmatrix}0&1\\ -1&0\end{pmatrix}}}x+\underset{\phi(y,u)}{\underbrace{\begin{pmatrix}\epsilon\sin(y)+u^2\\-u\end{pmatrix}}}\,,
    y=x_1\,.
\]
 with $\epsilon>0$. The system is already in ONF, thus $P=I$. For the following study, the system parameters are chosen as in \cite{MazencMN20}, namely $L_\phi=\epsilon=\frac16$ and $\overline{T}=0.22$ with respect to a constant sampling rate $t_{i+1}-t_i=\overline{T}$. The latter is selected due to the specific observer and its stability condition, which ends up considering an equivalent horizon length of $\tau=\frac\pi2$. For the MF-based algorithm, $T=2$ is chosen with the corresponding kernel shape shown in Figure \ref{fig:umf}.
 Hence, both approaches are tuned to just marginally fulfill the closed-loop stability condition $\overline{T}\lambda<1\Leftrightarrow\lambda\approx4.5$.
The demonstration and comparison are performed for the noisy case with $\nu\in\cN(0,10^{-2})$ corresponding to an SNR value of $29.6\,\text{dB}$. The different output signals, which are labeled in the block diagram (Figure \ref{fig:block_diagr_impl}), are presented in Figure \ref{fig:output_noise}.
\begin{figure}[!t]
    \centering
    \includegraphics[width=0.95\columnwidth]{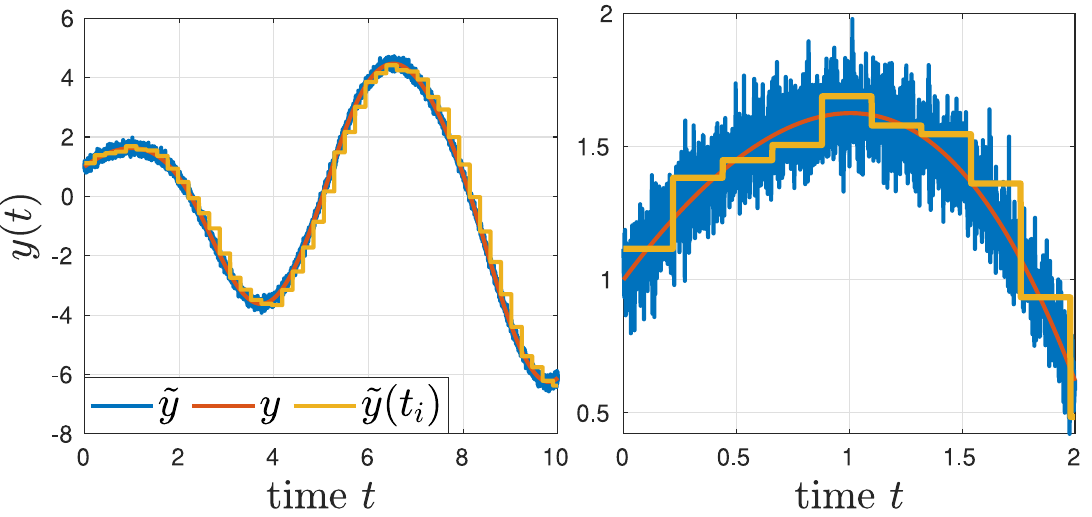}\vspace{-0.25cm}
    \caption{Output signal $y$, noisy measurement $\tilde{y}$ and sampled data $\tilde{y}(t_i)$ (time axis zoom right).}
    \label{fig:output_noise}
\end{figure}
The challenge of dealing with the perturbed sensor data that is only available at the equidistant sampling points becomes evident. Applying the dynamic output predictor improves the quality of the output data compared to only using the sampled data as demonstrated in Figure \ref{fig:ey_cmp}.
\begin{figure}[!t]
    \centering
    \includegraphics[width=1\columnwidth]{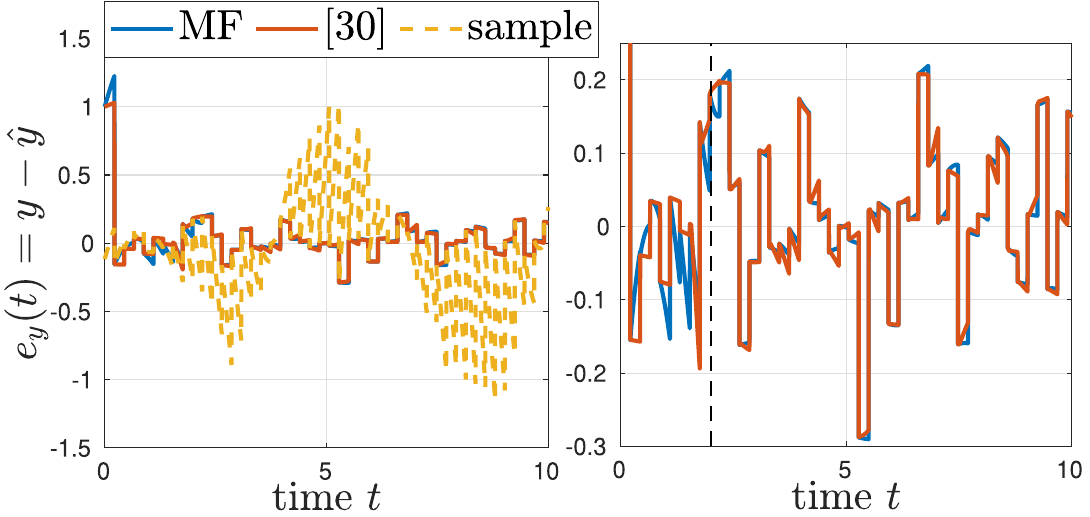}\vspace{-0.25cm}
    \caption{Comparison of different output errors $e_y=y-\hat{y}$ for the scenarios of observer \eqref{eq:sdobs}, the approach \cite{MazencMN20} and for just considering sampled data $\tilde{y}(t_i)$ in the noisy case (y-axis zoom right).}
    \label{fig:ey_cmp}
\end{figure}
It is notable that both algebraic approaches lead to nearly a congruent output behavior. Based on this prediction and past information, the observer part reconstructs the current state value which is verified in Figure \ref{fig:state_estim_noise}.
\begin{figure}[!t]
    \centering
    \includegraphics[width=0.95\columnwidth]{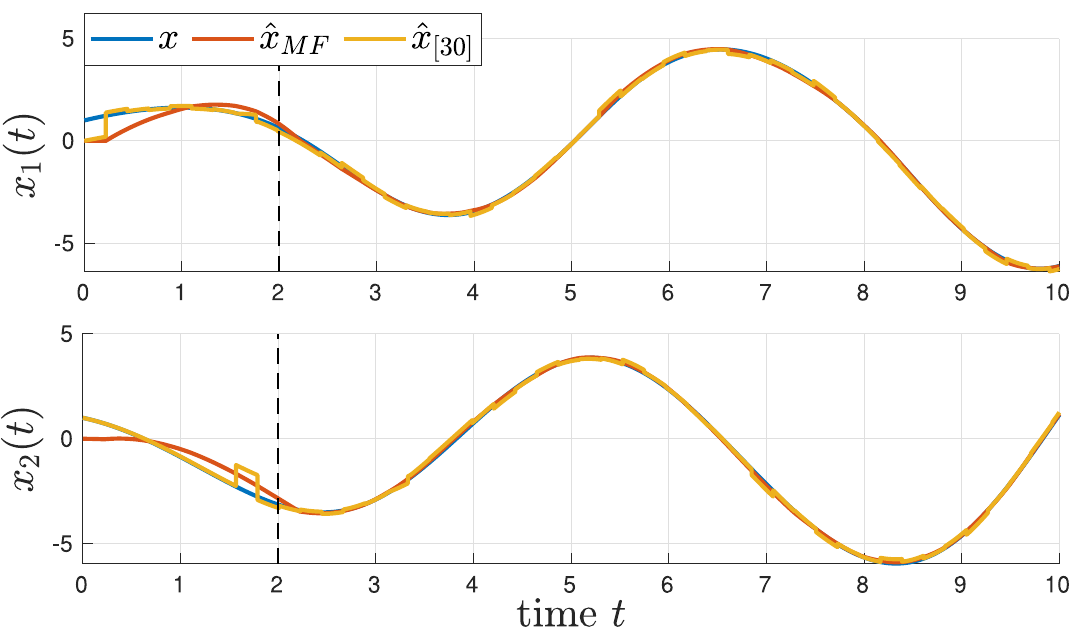}\vspace{-0.25cm}
    \caption{State estimation result $\hat{x}(t)$ using different algorithms.}
    \label{fig:state_estim_noise}
\end{figure}
After filling the data horizon with $T=2$ marked by a dashed line, both approaches capture the unknown state similarly. Figure \ref{fig:ex_cmp} gives more insight into the respective estimation error norms.
\begin{figure}[!t]
    \centering
    \includegraphics[width=1\columnwidth]{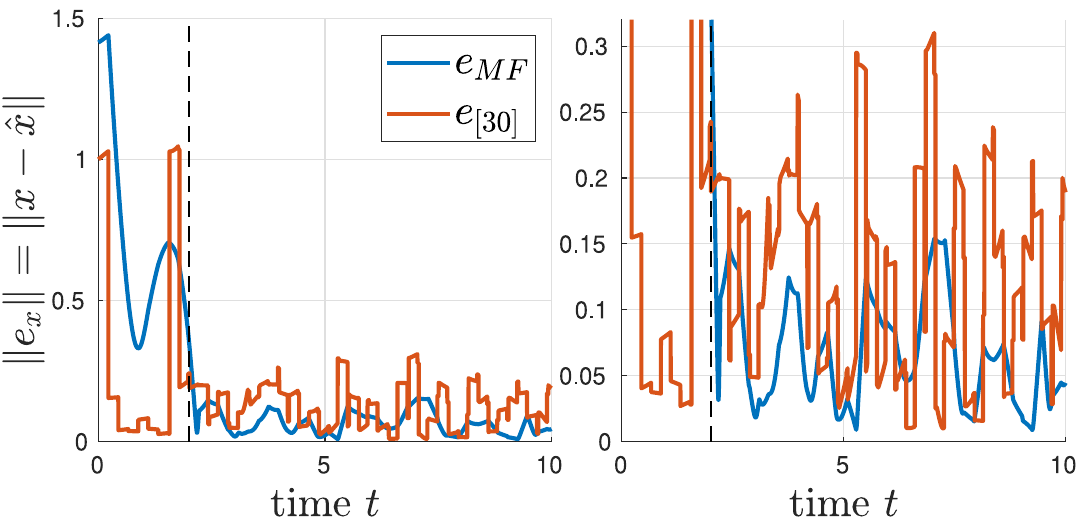}\vspace{-0.25cm}
    \caption{Norm of state estimation error $e_x=x-\hat{x}$ for the observer \eqref{eq:sdobs} and the approach \cite{MazencMN20} in the noisy case (y-axis zoom right).}
    \label{fig:ex_cmp}
\end{figure}
The impact of the noise $\nu$ is evident. Without it, the typical asymptotic convergence behavior can be observed. The MF-based observer is able of mitigating its effect in a better way due to the various tuning options it provides. While the estimator in \cite{MazencMN20} depends only on the shift parameter $\tau$, the approach \eqref{eq:sdobs} exhibits the corresponding parameter $T$ and besides that, offers some filtering options through the variation of the gains $\eta_a,\eta_\phi$ and $\eta_d$ discussed in Remark \ref{rem:kernel_design}. These can directly be computed and optimized over the UMF $\varphi$ due to the time-invariant characteristic input and output coefficients.\\
In order to assess the conservativeness of the stability bound $\overline{T}\lambda(\vp,\overline{T})$, the sampling delay can be increased to a point where the close-loop setup starts becoming unstable. For the given example, this happens around a delay value of $10\overline{T}=2.2$ indicating that there remains space for a more accurate bound.
However, from the experience of the simulation studies that were carried out, the criterion turns out to be practically relevant.\\

\subsection{Time-variant Pendulum Dynamics with Horizontal Disturbance}
\label{ssec:tvex}
The second simulation showcase demonstrates the presented framework for a time-variant system case and includes the evaluation of the calculated stability bounds from Theorem \ref{thm:sd_observer} discussing their conservativeness. Inspired by \cite[Sec.5.1]{MazencMN22}, the following equation of motion for a pendulum of length $L\!=\!1$ and mass $M\!=\!1$ whose suspension point is subjected to a time-varying, bounded, horizontal acceleration $h(t)$  with bounded time-variant friction $0\leqslant k(t)=\frac{1+\sin(t)}{10}\leqslant\bar{k}=0.2$ \cite[p.627,Exercise14.5]{khalil2002nonlinear} is considered:
\begin{equation*}
ML\ddot{\theta}+Mg\sin(\theta)+k(t)L\dot\theta=\frac{T_c(t)}{L}+Mh(t)\cos(\theta)
\end{equation*}
where $g$ is the gravitational constant. The horizontal force is unknown and treated as perturbation.
The corresponding state space form \eqref{eq:sd_sys} with $x=[\theta,\dot\theta]^\top,u=T_c,d=\frac{h(t)}{L}\cos{x_1}$ is obtained as:
\begin{equation*}
    \begin{aligned}
        &\dot x =\overbrace{\begin{pmatrix}0&1\\0&\mi\frac{k(t)}{M}\end{pmatrix}}^{=A(t)}x
        +\overbrace{\begin{pmatrix}0\\\mi\frac{g}{L}\sin\left(\frac{y}{c_o}\right)+\frac{u}{ML^2}\end{pmatrix}}^{=\phi(y,u)}+\overbrace{\begin{pmatrix}0\\1\end{pmatrix}}^{=E}d \\
        &y(t_i) = \,c_o\theta(t_i)+\nu
        =\underbrace{\begin{pmatrix}c_o&0\end{pmatrix}}_{=C}x(t_i)+\nu
    \end{aligned}
\end{equation*}
with the known output coefficient $c_o=2>0$.
For applying the developed methodology, the transformation of the linear part into OCF is pursued as $(A,C)$ here is nearly in canonical observability form. Checking the observability matrix referred to in Assumption \ref{as:obsvb}:
\begin{equation*}
    \cO = \begin{pmatrix}C\\\dot{C}+CA\end{pmatrix} = \begin{pmatrix}c_o&0\\0&c_o\end{pmatrix}
    ~\Rightarrow~ \det(\cO)=c_o^2=\delta
\end{equation*}
the required strong observability becomes clear. Thus, the parameterization of the OCF \eqref{eq:OCF} is calculated in the following way:
\begin{equation*}
    \begin{aligned}
         &q=\cO\inv e_2=\begin{pmatrix}0\\\frac{1}{c_o}\end{pmatrix} \\[-1ex]
        &\hspace{2em}\Rightarrow P = \begin{pmatrix}q&&A(t)q-\dot q\end{pmatrix}J_2=
            \left(\begin{array}{cc}
            \frac{1}{c_o } & 0\\
            -\frac{k\left(t\right)}{M\,c_o } & \frac{1}{c_o }
            \end{array}\right)\\
        &\hspace{2em}\Rightarrow A_o=P\inv(AP-\dot{P})=\left(\begin{array}{cc}
            -\frac{k\left(t\right)}{M} & 1\\
            \frac{\dot k\left(t\right)}{M} & 0
            \end{array}\right),\\[-0.5ex]
        &\hspace{1em}B_o=P\inv I=\left(\begin{array}{cc}
            c_o  & 0\\
            \frac{c_o \,k\left(t\right)}{M} & c_o 
            \end{array}\right),\,
        E_o=P\inv E=\begin{pmatrix}0\\c_o\end{pmatrix} 
    \end{aligned}
\end{equation*}
and accordingly, $C_o=\begin{pmatrix}1&0\end{pmatrix}$ and $a(t)=\left[\frac{k\left(t\right)}{M},\mi\frac{\dot k\left(t\right)}{M}\right]^\top$. These characteristic coefficients relate to the right PDO rearrangement with the input-output dynamic relation \eqref{eq:inout}:
\begin{equation*}
    \ddot{y}+\left(\frac{k(t)}{M}y\right)^{(1)}\!\!\!-\left(\frac{\dot k(t)}{M}y\right)
    \!=\!\left(B_1^\top(t)\phi\right)^{(1)}\!\!+\left(B_0^\top(t)\phi\right)+c_od
\end{equation*}
from which the operator form \eqref{eq:inoutop} is derived with the resulting expressions for $L,\cB,\cE$ provided in \eqref{eq:sd_operators}.
The characteristic bounds used in Theorem \ref{thm:sd_observer} are collected here:
\begin{itemize}
    \item $K_1=\|C\|=c_o$,
    \item $K_2=\|CE\|=0$,
    \item $K_3=\|CAP\|_{\cLi}=\underset{t\in\mathbb{R}_0^+}{\sup}\sqrt{1+\frac{k(t)^2}{M^2}}=\sqrt{1+\frac{\bar k^2}{M^2}}$, \\[-1ex]
    \item $L_\phi=\max\left\|\frac{\partial \phi}{\partial y}\right\|=\frac{g}{c_oL}$.
    \item $T=1$ for comparability to horizon in \cite{MazencMN22},
    \item $t_{i+1}-t_i=\overline{T}=0.02$ for simulation purposes (see Assumption \ref{as:sd_sys}).
\end{itemize}
The MF dependent gains $\bar\eta_a,\bar\eta_\phi,\bar\eta_d$ originate from the selected kernel fulfilling the UMF conditions from Definition \ref{def:mf}. As outlined in Remark \ref{rem:kernel_design}, an optimal selection can be performed by minimizing the cost function $J(\vp)=\bar\eta_a+L_\phi\bar\eta_\phi$ over $\varphi\in\cC^n([0,T]\in\mathbb{R}^{n\times n})$ subject to the UMF boundary conditions. A sum of norms cost functional like that, is not trivial to optimize and a finite dimensional subspace could be selected. Such a step leads to the MF presented in Figure \ref{fig:umf_opt} based on the given parameter values for the system.
\begin{figure}[!t]
    \centering
    \includegraphics[width=0.99\columnwidth]{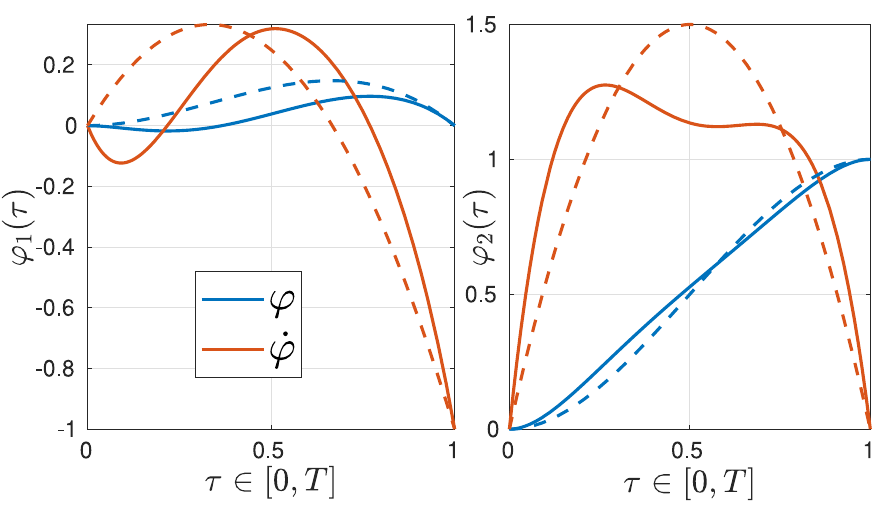}\vspace{-0.25cm}
    \caption{Sub-optimal UMF kernel $\varphi$ w.r.t. stability margin \eqref{eq:stab_crit} containing the cost term $J=\bar\eta_a+L_\phi\bar\eta_\phi\approx13.8$ (original kernels from Figure \ref{fig:umf} in dashed lines with $J\approx16.1$).}
    \label{fig:umf_opt}
\end{figure}
An improvement in comparison to the former MF shown in Figure \ref{fig:umf} can be noted. However, the maximum norm formulation does not leave arbitrary space for enhancing the norms.
In order to get an impression of the MF gain behavior, Figure \ref{fig:mf_bound_plot} illustrates the different time-varying norms.
\begin{figure}[!t]
    \centering
    \includegraphics[width=0.99\columnwidth]{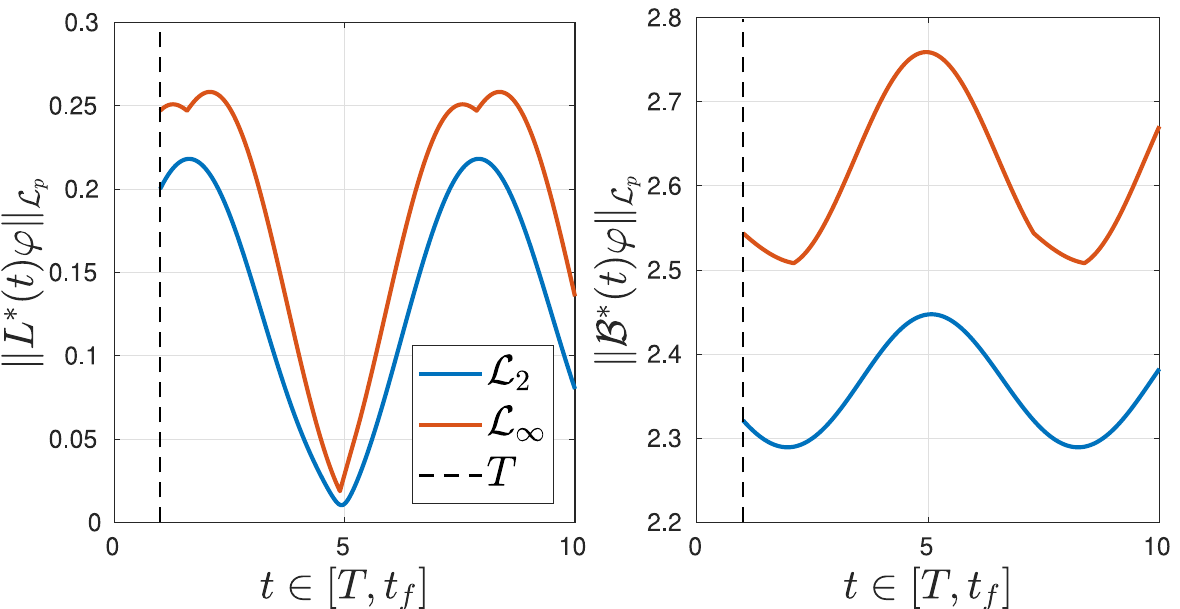}\vspace{-0.25cm}
    \caption{Time-variant function norms in $\mathcal{L}_p[0,T],p\in\{2,\infty\}$ of the adjoint operation specified in Lemma \ref{lem:shift} applied to the MF with the characteristic gains from the error considerations in Theorem \ref{thm:sd_observer} resulting in $\bar\eta_a\approx0.26$ and $\bar\eta_\phi\approx2.76$.}
    \label{fig:mf_bound_plot}
\end{figure}
Due to the boundedness of $A(t)$ and the strong observability, the MF gains show a periodic behavior with the worst case being taken into account for the stability criterion \eqref{eq:stab_crit}.
Focusing on the right graph in Figure \ref{fig:mf_bound_plot} related to the input operator $\cB$, it can be noted that the norms are relatively large in magnitude compared to the left side. This is partially due to the lack of structure in regards to the nonlinear term $\phi(y,u,t)$ which is considered as a vector input with input matrix $B=I$ resulting in components of $B_o$ that are not necessary for representing the PDO. Thus, the norm bounding procedure has potential for reduced and more accurate margins by imposing additional structure on the nonlinearity input similar to the impact of disturbance $d$. There, incorporating the knowledge about $E$ leads to $K_2=0$ instead of equalling $c_o$ in the alternative unstructured case $E=I$.
For evaluating the largest admissible sampling bound $\overline{T}<\frac{1}{\lambda}$, consider the following:
\begin{equation*}
    \begin{aligned}
        \lambda=&K_1L_\phi+K_3T(\bar\eta_a+\bar\eta_\phi L_\phi)\\
        =&\frac{g}{L}+T\sqrt{1+\frac{\bar k^2}{M^2}}\left(\bar\eta_a+\frac{\bar\eta_\phi g}{c_oL}\right)
        \approx 23.9 \\
        \Rightarrow \quad&\overline{T}\approx0.042\,.
    \end{aligned}
\end{equation*}
In order to set this result in relation to the discussion presented in \cite{MazencMN22}, consider the presented criterion with an adaptation of the notation and a minor correction including the Lipschitz constant (check \cite[obtaining Equ.(48) from (47)]{MazencMN22}):
\begin{equation*}
    \begin{aligned}
        \lambda_{\!\text{\cite{MazencMN22}}}=&\|CA\Psi\|_\infty\left(\sqrt{n}+\sqrt{\tau}{\color{black}L_\phi}\sqrt{\sum_{j=1}^{n-1}j\sigma_j}\right)+K_1L_\phi\\
        \approx&\,19.1
        \quad\Rightarrow\quad\overline{T}\approx0.052
    \end{aligned}
\end{equation*}
with $\sigma_j=\sup\{\|C\Phi_A(r,s)\|:0\leqslant r\leqslant s\leqslant r+j\tau\}$ related to the transition matrix $\Phi_A(t_1,t_0)$ and $\Psi$ representing the inverse of the shift-based observability matrix with respect to the shift parameter $\tau$.
In both cases, the maximum bounds on the sampling time $\overline{T}$ for which convergence can be guaranteed using by means of the trajectory-based approach result in values of similar scope. Nevertheless, the deviation between those stability margins is due to the distinct exact observer algorithms and related error bounding strategies.\\
For evaluating the accuracy or respective conservativeness of the stability criterion, the state estimation error bounded by the right-hand side of the inequality \eqref{eq:errorx} (RHS) is validated. To this end, different disturbance levels of $d(t)=\frac{h(t)}{L}\cos{(x_1(t))}$ are considered by taking a set of constant vertical accelerations $h\equiv H=\text{const.}$ into account. Note that $d(t)$ remains a time-varying perturbation term.
Figure \eqref{fig:bnd_compare} illustrates the resulting state estimation error norm $\|e_x\|=\|x-\hat{x}\|$ obtained by applying the observer algorithm \eqref{eq:sdobs} from Theorem \ref{thm:sd_observer} to the pendulum system with simulation parameters from Table \ref{tab:param} and puts it into the context of the corresponding RHS bound for a selection of disturbance values.
\begin{figure}[!t]
    \centering
    \includegraphics[width=0.99\columnwidth]{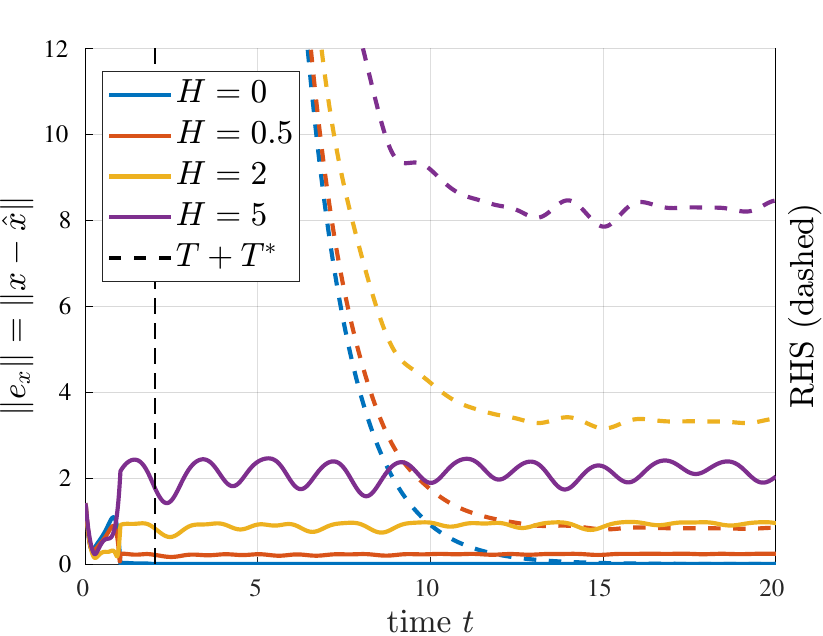}\vspace{-0.25cm}
    \caption{State estimation error norm $\|e_x(t)\|$ in comparison to right-hand side of bound relation \eqref{eq:errorx} (RHS) in dashed lines for different disturbance levels $H\in\{0,0.5,2,5\}$.}
    \label{fig:bnd_compare}
\end{figure}
In the absence of perturbations (blue line), the estimation error and the related dashed bound asymptotically converge to zero as proved for the given sampling delay $\overline{T}$. Notably, the error as well as its bound increase for larger disturbance amplitudes. The accuracy of the bound represented by the deviation from the real error trajectory decreases with rising $H$ values and thus, becomes increasingly conservative.
The RHS starting values at $t=T+T^\ast$ not displayed in Figure \eqref{fig:bnd_compare} lie in a similar numeric range of $300$ for the given system configuration.
Considering the stability result related to the criterion \eqref{eq:stab_crit}, it is worth pointing out that the bound characterizations like \eqref{eq:state_error} become more conservative as $\overline{T}\lambda\rightarrow1$ leading to decreasing coefficients within the exponential argument which consists of $\ln(\overline{T}\lambda)(t-T^\ast)/T^\ast$. Accordingly, the asymptotic bound convergence rate starts diminishing towards zero. Overall, it can be noted that the quantitative error assessment provided by Theorem \ref{thm:sd_observer} forms a helpful tool for convergence and robustness analysis while loosing precision when confronted with circumstances such as relatively large disturbances or marginal stability leading to conservative characterizations. Nevertheless, it remains a crucial indicator for design purposes and robustly selecting tuning parameters.\\
For a concluding comparison between the two non-asymptotic observer approaches from \cite{MazencMN22} and the methodology presented here, several similarities but also distinct features can be noted.
Due to their algebraic structure, both observers ensure an exact state representation after a fixed time $T=(n\!\mi\!1)\tau$. The finite time horizon serves as a simple tuning parameter for balancing convergence time against effects such as sensor noise. In both works, the fundamental assumptions for guaranteeing a stable sampled-data configuration turn out to be equivalent.
However, the different formulations of the time-variant observability condition are related to different perspectives on the problem representation. While in \cite{MazencMN22}, a shift-based criterion relying on the computation of the transition matrix is applied, the more conventional derivative-based transformation path is used here. Effectively, this comes down to the design decision on whether to integrate or differentiate the involved vector fields from the dynamic model in form of the given matrix functions (e.g. integrability or smoothness).
A notable feature of the MF framework are the kernel shapes and the related gains which still leave space for additional tuning and systematic optimization.
Regarding advanced considerations, the obtained stability and performance bounds may still be improved by taking advantage of the given system structure as well as by elaborating on alternative bound and norm argumentation for less conservative results.


\begin{table}[!t]
    \centering
    \caption{Numeric simulation and characteristic algorithm parameters for example from Section \ref{ssec:tvex} in comparison to \cite{MazencMN22}.}\label{tab:param}
    \vspace*{1ex}
    \renewcommand*{\arraystretch}{1.05}
    \begin{tabular}{l|c|c}\hline
        \textit{Parameter} & \textit{Symbol} & \textit{Value}  \\\hline\hline
        Mass & $M$ & $1$  \\
        Pendulum length & $L$ & $1$  \\
        Gravitation & $g$ & $9.81$  \\
        Friction function & $k(t)$ & $(1+\sin{t})/10$  \\
        Friction bound & $\bar k$ & $0.2$  \\
        Vertical acceleration & $d(t)$ & $h(t)\cos{(x_1(t))}/L$  \\
        Disturbance amplitude & $h(t)=H$ & $\{0,0.5,2,5\}$  \\
        Output coefficient & $c_o$ & $2$  \\
        Lipschitz constant & $L_\phi=\frac{g}{c_oL}$ & $4.91$  \\\hline
        Numeric sampling & $T_s$ & $1\cdot10^{\mi3}$  \\
        Output samples (As.\ref{as:sd_sys}) & $t_{i+1}-t_i$ & $0.02$ \\\hline
    \end{tabular}\\[3ex]
    \begin{tabular}{p{2.75cm}|c||c|c}\hline
        \textit{Characteristics} & \textit{Symbol} & \textit{Value} & \cite{MazencMN22}  \\\hline\hline
        Horizon length & $T=(n\!\mi\!1)\tau$ & $T=1$ & $\tau=1$ \\
        Stability factor \eqref{eq:stab_crit} & $\lambda$ & $23.9$ & $19.1$ \\
        Lin. dynamics MF (Fig.\ref{fig:umf_opt}) gain (Fig.\ref{fig:mf_bound_plot}) & $\bar\eta_a$ & 0.26 & -  \\
        Nonlinear MF gain & $\bar\eta_\phi$ & 2.76 & -  \\
        Stability cost & $J=\bar\eta_a\pl L_\phi\bar\eta_\phi$ & 13.8 & -  \\
        Maximum feasible sampling bound & $\overline{T}^\ast=1/\lambda$ & 0.042 & 0.052  \\
        Transform. gain & $\|P\|_{\cLi}$ & $0.55$ & - \\
        Pos. constant \eqref{eq:errorx} & $\alpha_x$ & $39.8$ & - \\\hline
    \end{tabular}
\end{table}


\section{Conclusion}\label{concl}

In this paper, we propose an MF-based algebraic observer combined with an output predictor for reconstructing the state of a linear time-varying system (LTV) and a sampled-data system with output nonlinearity. An $\cL2$-gain stability guarantee is provided for the LTV systems. Then, a guarantee for exponential convergence is given in the perturbation-free case as well as an input-to-state stability characterization describing the impact of measurement noise and process disturbances for the sampled-data systems.
The underlying basic MF state observer approach offers several tuning options either in the form of the moving horizon length or by using the specific disturbance gains. As a consequence of the OCF derivation, the general implementation is relatively simple with low computational cost.
In a simulation study, the developed method is compared to an established observer in \cite{MazencMN20} and \cite{MazencMN22} with a similar architecture that motivated the stability considerations of this work. The influence of different factors, such as the sampling period and perturbation due to white noise, and their relation to the stability margin are discussed.  The trajectory-based stability approach demonstrated here for the first time in the context of the MF method offers several opportunities to analyze closed-loop systems containing MF components and their design. Furthermore, the potential of the proposed algebraic observer in the event-triggered context where the sampling times $t_i$ are evenly spaced will be further investigated in future work.\\






\section*{Acknowledgment}
This project receives funding from the European Union's Horizon 2020 Research and Innovation Programme under grant agreement No 824046.

\bibliographystyle{plain}
\bibliography{biblio_KAUST,literature_MF}

\end{document}